\documentclass [12pt]{article}
\usepackage{cite}
\usepackage{graphicx}
\usepackage[small]{subfigure,epsfig}
\usepackage{indentfirst}
\usepackage{amsmath,latexsym,enumerate}
\usepackage{amssymb}
\usepackage{amsfonts}
\usepackage{array,longtable,lscape}
\usepackage{amsthm}

\usepackage{color}
\usepackage{authblk}

\theoremstyle{plain} \newtheorem{theorem}{Theorem}

\newtheorem{proposition}{Proposition}

\newtheorem{corollary}{Corollary}

\numberwithin{equation}{section} \numberwithin{lemma}{section} \numberwithin{theorem}{section} \numberwithin{proposition}{section} \numberwithin{corollary}{section}
\numberwithin{remark}{section}

\usepackage{geometry} 
\begin{document}

\title{Metrisable oscillators and (super)integrable two-dimensional metrics}

\author[1]{Jaume Gin\'e}

\author[2,3]{Dmitry I. Sinelshchikov}

\affil[1]{Departament de Matem\`atica, Universitat de Lleida, Avda. Jaume II, 69; 25001 Lleida, Catalonia, Spain}
\affil[2]{Instituto Biofisika (UPV/EHU, CSIC), University of the Basque Country, Leioa E-48940, Spain}
\affil[3]{Ikerbasque Foundation, Bilbao 48013, Spain}

\maketitle

\begin{abstract}
We consider a family of nonlinear oscillators, which is the autonomous case of the two-dimensional projective connection. We construct several classes of these oscillators that are simultaneously integrable and metrisable. This leads to families of (super)integrable two-dimensional metrics that are parametrized by arbitrary functions. In the superintegrable case we obtain an explicit expression for the unparametrized geodesics. In the integrable case we present two families of metrics with transcendental first integrals. We introduce the concept of generalized Darboux integrability in the context of both projective equations and geodesic flows. We demonstrate that the constructed integrable metrics are generalized Darboux integrable.
In addition, we establish a direct connection between relative Killing vectors and invariants of the projective vector fields that are linear in the first derivative. Finally, we compute the dimensions of the projective Lie algebra for the obtained metrics, which allows us to distinguish previously known integrable cases from new ones.
\end{abstract}

\section{Introduction}
The (super)integrability of two-dimensional geodesic flows on Riemannian manifolds has been extensively studied both in mathematical physics and differential geometry \cite{Matveev2008,Matveev2011,Bolsinov2018,Matveev2019}. However there are still open problems in this area. For example, there are open problems that are connected with the existence of polynomial and rational first integrals \cite{Matveev2011,Kozlov2014,Valent2015,Bolsinov2018,Matveev2019} or explicit expressions for the geodesics \cite{Bolsinov2018,Matveev2019}. 

Furthermore, there has been interest in constructing (super)integrable metrics with polynomial and rational first integrals (see, e.g. \cite{Kruglikov2008,Kozlov2014,Valent2015,Matveev2016,Valent2017,Bagderina2017,Agapov2021,Kruglikov2024,Kruglikov2025}). For example, in \cite{Kozlov2014} it was demonstrated that there exist metrics with rational integrals of any degree. In \cite{Kruglikov2024,Kruglikov2025,Kruglikov2026} it is proved that for any metric the space of rational integrals is finite-dimensional and the connection between relative Killing tensor fields and rational integrals is discussed. 

Recently, the connections between integrability of nonlinear oscillators and geodesic flows and vice versa \cite{GS2024,Agapov2024,Agapov2025} have been studied. In this work, we extend the results obtained in \cite{GS2024} and study the metrisability of nonlinear oscillators that can be linearized with the help of nonlocal transformations \cite{Sinelshchikov2020,Sinelshchikov2020a}. We propose a new approach for constructing (super)integrable metrics from nonlinear oscillators, which is based on finding intersections between integrability and metrisability conditions. The main idea behind this approach is that nonlocal transformations neither preserve metrisability nor Lie symmetries of the projective connection, but preserve both its form and integrability. This allows one to deform a trivially metrisable projective equation, for example a linear equation, and obtain new nontrivially metrisable and integrable projective connections.

The linearizability conditions for the two-dimensional autonomous projective connections via generalized nonlocal transformations split into three cases, each of which provides a family of integrable metrics parametrized by arbitrary functions. For the first linearizability condition we demonstrate that there is a family of superintegrable metrics with a linear and a transcendental in momenta first integrals. We provide a completed classification the projective Lie algebras of this family of metrics. As a byproduct of this result, we show that for metrisable equations obtained in \cite{Agapov2024} this Lie algebra is either $\mathfrak{sl}(2,\mathbb{R})$ or $\mathfrak{sl}(3,\mathbb{R})$. This means that these integrable metrics are either Darboux-superintegrable (i.e. they have four linearly independent second-degree polynomial first integrals, see e.g. \cite{Matveev2008}) or flat, i.e. have a constant curvature. This appears to be the first time when the complete classification of the projective Lie algebras for linearizable via nonlocal transformations oscillators is carried out.

For the second and third linearizability conditions we obtain two new integrable families of the conformal metrics with, in general, transcendental first integrals. We demonstrate that these first integrals are Darboux functions (see, e.g. \cite{Llibre_book,Zhang} for the definition) of invariants of the projective connection that are polynomial in the first derivative and of relative Killing vectors for the geodesic flow. This allows us to introduce the concept of the generalized Darboux integrability for both projective vector fields and Hamiltonian geodesic flows. In contrast to the classical Darboux integrability theory (see, e.g. \cite{Llibre_book,Zhang,Goriely}), where Darboux first integrals are formed by the polynomial invariants, we allow the invariants to be polynomials only in a subset of phase variables (see also \cite{GG2003,GG2010,GS2025}). We also demonstrate that there is a direct connection between the relative Killing vectors of geodesic flows and the invariants of projective vector fields that are linear in the first derivative.

Finally, we obtain that for particular values of the parameters, one of the above transcendental first integrals can be reduced to a rational integral of an arbitrary degree. It is also worth noting that the conditions that define these integrable conformal metrics are themselves integrable second-order differential equations. Furthermore, we prove that in the generic case of these metrics their projective Lie algebra is one-dimensional.

The rest of this work is organized as follows. In the next Section we provide basic definitions that concern integrability of two-dimensional Riemannian metrics and nonlinear oscillators. We also provide direct formulas that connect the relative Killing vectors and linear in the first derivative invariants of projective vector fields. In Section \ref{sec:symm} we discuss the family of linearizable via nonlocal transformations cubic oscillators, construct their autonomous and non-autonomous first integrals and classify their Lie symmetries. We devote Section \ref{sec:metr} to the construction of superintegrable metrics with a linear and a transcendental first integrals. In Section \ref{sec:conf} we construct integrable families of the conformal metrics and introduce the notion of generalized Darboux integrability. In the last Section we briefly summarize and discuss our results.

\section{Projective equation and its integrability}
\label{sec:intro}

We begin by briefly introducing the definitions and concepts that are used in the subsequent sections.

Consider a smooth two-dimensional manifold $M$ with the (pseudo)Riemannian metric $\mathfrak{g}=(g_{ij})$, $i,j=1,2$, $g_{ij}=g_{ji}$. 
The parametrized geodesics of $\mathfrak{g}$ on $TM$ are solutions of
\begin{equation}\label{eq:geodesics_eq}
\begin{gathered}
  \ddot{x}^{i}+\Gamma^{i}_{jk}\dot{x}^{j}\dot{x}^{k}=0, \quad (x^{1},x^{2})=(x,y),
  \end{gathered}
\end{equation}
where $\Gamma^{i}_{jk}$ is the Levi-Civita connection \cite{DNF1} for the metric $\mathfrak{g}$. Notice that here and below the summation convention over repeated indices is assumed. We also denote by $\mathcal{X}_{\Gamma}=v^{i}\partial_{x^{i}}-\Gamma^{i}_{jk}v^{j}v^{k}\partial_{v^{i}}$, $v^{i}=\dot{x}^{i}$ the geodesic spray.

On $T^{*}M$ system \eqref{eq:geodesics_eq} can be presented in the Hamiltonian form:
\begin{equation}\label{eq:geodesics_H}
  \dot{x}^{i}=H_{p_{i}}, \quad \dot{p}^{i}=-H_{x_{i}}, \quad H=\frac{1}{2}g^{ij}p_{i}p_{j}, \quad g^{ik}g_{kj}=\delta^{i}_{j}.
\end{equation}
One can project \eqref{eq:geodesics_eq} on the $(x,y)$ plane to obtain
\begin{equation}\label{eq:cubic_eq}
  y_{xx}+a_{3}(x,y)y_{x}^{3}+a_{2}(x,y)y_{x}^{2}+a_{1}(x,y)y_{x}+a_{0}(x,y)=0,
\end{equation}
where
\begin{equation}\label{eq:projection}
  a_{3}=-\Gamma_{22}^{1}, \quad a_{2}=\Gamma_{22}^{2}-2\Gamma_{12}^{1}, \quad a_{1}=2\Gamma_{12}^{2}-\Gamma_{11}^{1}, \quad a_{0}=\Gamma_{11}^{2}.
\end{equation}
Therefore, for every metric there exists an equation from \eqref{eq:cubic_eq} that is a projection of its geodesic flow. However, the converse of this statement is not true, i.e. not every equation from \eqref{eq:cubic_eq} corresponds to some metric. This results from the fact that there are four coefficients of the projective connection \eqref{eq:cubic_eq} and only three components of the metric $\mathfrak{g}$. Equations from \eqref{eq:cubic_eq} that correspond to a metric are called metrisable and necessary and sufficient conditions for metrisability of \eqref{eq:cubic_eq} were obtained in \cite{Dunajski2009}. Notice that if \eqref{eq:cubic_eq} is metrisable then its solutions are unparameterized geodesics of the corresponding metric. The vector filed associated to \eqref{eq:cubic_eq} is denoted $\mathcal{X}_{p}=\partial_{x}+u\partial y -\sum_{j=0}^{3}(a_{j}u^{j})\partial_{u}$, $u=y_{x}$.

In this work we study metrisability of a family of cubic oscillators, which is a particular case of \eqref{eq:cubic_eq} and is given by
\begin{equation}\label{eq:cubic_osc}
  \begin{gathered}
 y_{xx}+k(y)y_{x}^{3}+h(y)y_{x}^{2}+f(y)y_{x}+g(y)=0.
  \end{gathered}
\end{equation}
Here $k(y)$, $h(y)$, $f(y)$ and $g(y)$ are some sufficiently smooth functions. Throughout this work we assume that $g(y)\neq0$. The following vector filed  $\mathcal{X}=u\partial_{y}-(k u^{3}+h u^{2}+f u+g)\partial_{u}$, which is $x$ independent case of $\mathcal{X}_{p}$, is associated to \eqref{eq:cubic_osc}. We also remark that the restriction to the autonomous case of the projective connection does not imply that the corresponding metric is independent of $x$.

While the metrisability conditions for the general case of \eqref{eq:cubic_osc} are cumbersome and follow from the results of \cite{Dunajski2009}, it is worth considering particular cases of \eqref{eq:cubic_osc} that admit an explicit (possibly nonautonomous) first integral. Such integral can be lifted to a first integral of \eqref{eq:geodesics_H}. If it is functionally independent of the Hamiltonian, this establishes the (super)integrability of \eqref{eq:geodesics_H}. Moreover, this idea can also be used in the reverse order, when first integrals of \eqref{eq:geodesics_H} are projected to demonstrate the integrability of an equation from \eqref{eq:cubic_osc}. Consequently, finding the intersections of integrability and metrisability conditions for \eqref{eq:cubic_osc} in particular, or more generally for \eqref{eq:cubic_eq}, provides a mechanism for generating (super)integrable metrics. Conversely, one can project the integrals of \eqref{eq:geodesics_H} onto the $(x,y)$ plane to prove the integrability of an equation from \eqref{eq:cubic_eq} or \eqref{eq:cubic_osc}.

Below, to study the metrisability of integrable cases of \eqref{eq:cubic_osc} we transform \eqref{eq:projection} into the Liouville system \cite{Liouville,Matveev2008,Dunajski2009}. Using the Christoffel symbols (see, e.g. \cite{DNF1}) to connect \eqref{eq:projection} to the metric components, and then applying the transformations 
\begin{equation}\label{eq:Liouville_1}
  \psi_{1}=\Delta^{2} g_{11}, \quad \psi_{2}=\Delta^{2}g_{12}, \quad \psi_{3}=\Delta^{2}g_{22}, \quad \Delta=\psi_{1}\psi_{3}-\psi_{2}^{2}\neq0,
\end{equation}
we obtain
\begin{equation}\label{eq:Liouville}
\begin{gathered}
  \psi_{1,x}=-\frac{2}{3}a_{1}\psi_{1}+2a_{0}\psi_{2},\\
  \psi_{3,y}=-2a_{3}\psi_{2}+\frac{2}{3}a_{2}\psi_{3},\\
  \psi_{1,y}+2\psi_{2,x}=-\frac{4}{3}a_{2}\psi_{1}+\frac{2}{3}a_{1}\psi_{2}+2a_{0}\psi_{3},\\
  \psi_{3,x}+2\psi_{2,y}=-2a_{3}\psi_{1}+\frac{4}{3}a_{1}\psi_{3}-\frac{2}{3}a_{2}\psi_{2}.
  \end{gathered}
\end{equation}
This is a linear overdetermined system of four equations for three functions $\psi_{m}$, $m=1,2,3$. An equation from \eqref{eq:cubic_eq} is metrisable if and only if there is a solution of \eqref{eq:Liouville} such that $\Delta\neq0$ \cite{Liouville,Matveev2008,Dunajski2009,Dunajski2018}.

Following \cite{Matveev2008,Matveev2012} we denote the Lie algebra of point symmetries of \eqref{eq:cubic_osc} as $\mathfrak{p}(c)$ in the general case and $\mathfrak{p}(\mathfrak{g})$ when the corresponding projective structure is metrisable. This Lie algebra is called the projective Lie algebra. It is well known \cite{Tresse,Kruglikov2009,Yumaguzhin2010,Bagderina2016} that for a general projective structure \eqref{eq:cubic_eq} the dimension of $\mathfrak{p}(c)$ belongs to $\{0,1,2,3,8\}$. Further we compute the dimensions of $\mathfrak{p}(\mathfrak{g})$ for metrisable cases of \eqref{eq:cubic_osc}. Recall also that if $\dim{\mathfrak{p}(g)}=3$ then the metric is Darboux superintegrable and if $\dim{\mathfrak{p}(g)}=8$ then the metric is flat and the corresponding projective connection can be linearized via the point transformations \cite{Matveev2008,Matveev2012}. 

Let us introduce the basic definitions that concern the integrability of the geodesic flow and its projection. A smooth function  $J=J(x,y,v^{1},v^{2})$ that satisfies the relation  $\mathcal{X}_{\Gamma}J=NJ$ for some $N=N(x,y,v^{1},v^{2})$, which is called the cofactor, is an invariant of the geodesic spray. We can introduce the invariants of the general projective connection or its autonomous case in the same way as solutions of $\mathcal{X}_{p}R=lR$, $R=R(x,y,u)$ for some cofactor $l=l(x,y,u)$ or  $\mathcal{X}R=lR$, $R=R(y,u)$ for some cofactor $l=l(y,u)$. A first integral is an invariant with zero cofactor. For \eqref{eq:cubic_osc} we also use the notion of an non-autonomous first integral which is a smooth function satisfying $\mathcal{R}_{x}+\mathcal{X}\mathcal{R}=0$ and an integrating factor $M(y,u)$, which is a smooth function that is a solution of the equation $\mbox{div}(M\mathcal{X})=0$. If there is a global, possibly except on a set of zero Lebesgue measure, explicit expression for the first integral or an integrating factor of autonomous projective equation \eqref{eq:cubic_osc}, the corresponding two-dimensional dynamical system is called completely integrable \cite{Goriely,Zhang,Llibre_book}.

Let $\{\cdot,\cdot\}$ be the canonical Poisson bracket. Then, a smooth function $T(x,y,p_{1},p_{2})$ is a first integral of \eqref{eq:geodesics_H} if $\{T,H\}=\mathcal{X}_{H}T=0$. Hamiltonian system \eqref{eq:geodesics_H} is integrable in the Liouville-Arnold sense if, apart from the Hamiltonian, there is an additional first integral, functionally independent of $H$, which is in involution with $H$. Hamiltonian system \eqref{eq:geodesics_H} is superintegrable if there are two first integrals of \eqref{eq:geodesics_H} that are in involution with $H$ and pairwise functionally independent. First integrals of the projective equation \eqref{eq:cubic_osc} are related to the first integrals of the geodesic flow \eqref{eq:geodesics_H} via the lift $y_{x}=u=H_{p_{2}}/H_{p_{1}}$. The inverse statement is true for the first integrals of \eqref{eq:geodesics_H} that are homogeneous functions of the momenta.

Recently, in \cite{Kruglikov2024,Kruglikov2025} the concept of the relative Killing tensors was introduced to study rational first integrals of geodesic flows (see also \cite{MP2004,Aoki2016}). A polynomial in momenta function of degree $d$ is called a relative Killing tensor if $ \{H,K\}=LK$, where $L$ is a necessary linear in momenta function. Geometrically this means that the Hamiltonian vector field is tangent to the submanifold $K=0$ on $T^{*}M$. 

Below we demonstrate that the first integrals obtained in Section 5 are transcendental functions of the homogeneous relative Killing tensors of degree 1 or homogeneous relative Killing vectors. Such integrals generalize functional class of rational first integrals to the class of the generalized Darboux functions. 

Finally, we establish a direct correspondence between homogeneous relative Killing vectors and polynomial with respect to $y_{x}=u$ invariants of the projective vector fields. 

\begin{proposition}
\label{p:p_KV}
Suppose that the coefficients of \eqref{eq:cubic_eq} satisfy \eqref{eq:projection} for some metric $\mathfrak{g}$ and also equation \eqref{eq:cubic_eq} has an invariant $Z=\sum_{i=1}^{2}e_{i}(x,y)u^{i-1}$, $e_{i}\in C^{\infty}(M)$ with the cofactor $l=\sum_{m=1}^{3}b_{m}(x,y)u^{m-1}$, $b_{m}\in C^{\infty}(M)$. Then the geodesic spray $\mathcal{X}_{\Gamma}$ and Hamiltonian \eqref{eq:geodesics_H} both have a homogeneous, linear in velocities, invariant and a homogeneous relative Killing vector, respectively, which have the form
\begin{align}
 J&=e_{i}v^{i},  &N&=b_{i}v^{i}-\Gamma^{1}_{1i}v^{i}-\Gamma^{1}_{12}v^{2}, \label{eq:GS_vector}\\
 K&=e^{k}p_{k},  &L&=\Gamma^{1}_{1j}g^{jk}p_{k}+\Gamma^{1}_{12}g^{2k}p_{k}-b^{k}p_{k}, \quad  e^{k}=g^{kj}e_{j}, \quad b^{k}=g^{kj}b_{j}.  \label{eq:Killing_vector}
\end{align}
Inversely, if  Hamiltonian system \eqref{eq:geodesics_H} has a homogeneous relative Killing vector, then equation \eqref{eq:cubic_eq}  whose coefficients satisfy \eqref{eq:projection} has an invariant that is linear in $y_{x}=u$  with a quadratic in $y_{x}=u$ cofactor.
\end{proposition}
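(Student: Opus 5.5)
The plan is to pass between the projective vector field $\mathcal{X}_{p}$ on $(x,y,u)$ and the geodesic spray $\mathcal{X}_{\Gamma}$ on $TM$ by homogenization in $v^{1}$, and then between $TM$ and $T^{*}M$ by the Legendre map $p_{k}=g_{kj}v^{j}$.

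\emph{Step 1: the spray and the projective field.} I first verify that for any $F=F(x,y,u)$ with $u=v^{2}/v^{1}$ one has $\mathcal{X}_{\Gamma}F=v^{1}\,\mathcal{X}_{p}F$. The $\partial_{x},\partial_{y}$ parts are immediate. For the $u$-component, $\dot u=(v^{1}\dot v^{2}-v^{2}\dot v^{1})/(v^{1})^{2}$. Substituting $\dot v^{i}=-\Gamma^{i}_{jk}v^{j}v^{k}$ and collecting powers of $u$ gives $\dot u=-v^{1}\sum_{j}a_{j}u^{j}$ exactly when the $a_{j}$ are given by \eqref{eq:projection}. This is the standard projection computation and is routine.

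\emph{Step 2: the forward direction on $TM$.} Put $J=v^{1}Z(x,y,v^{2}/v^{1})=e_{1}v^{1}+e_{2}v^{2}$. Then
\[
\mathcal{X}_{\Gamma}J=(\mathcal{X}_{\Gamma}v^{1})Z+v^{1}\mathcal{X}_{\Gamma}Z=\Bigl(-\Gamma^{1}_{jk}v^{j}v^{k}+(v^{1})^{2}l\Bigr)\frac{J}{(v^{1})^{2}}.
\]
I then show that the bracket is divisible by $v^{1}$. Compare the $u^{3}$-coefficients in $\mathcal{X}_{p}Z=lZ$: they give $-a_{3}e_{2}=b_{3}e_{2}$, hence $b_{3}=-a_{3}=\Gamma^{1}_{22}$ wherever $e_{2}\neq0$. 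With this, the $(v^{2})^{2}$ terms cancel, leaving $v^{1}\bigl(b_{1}v^{1}+b_{2}v^{2}-\Gamma^{1}_{11}v^{1}-2\Gamma^{1}_{12}v^{2}\bigr)$. This yields exactly the $N$ in \eqref{eq:GS_vector}.

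This divisibility is the step I expect to be the main obstacle. It is where $e_{2}\neq0$ enters. In the degenerate case $Z=e_{1}$, the same comparison forces $\Gamma^{1}_{22}=0$, and that case has to be treated or excluded separately.

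\emph{Step 3: transfer to $T^{*}M$.} The Legendre map $v^{i}=g^{ik}p_{k}$ sends $\mathcal{X}_{\Gamma}$ to $\mathcal{X}_{H}$. Hence $K=J\circ(\text{Legendre})^{-1}=e_{i}g^{ik}p_{k}=e^{k}p_{k}$ satisfies $\mathcal{X}_{H}K=\tilde N K$, where $\tilde N$ is $N$ written in momenta. Since $\{H,K\}=-\mathcal{X}_{H}K$ with the bracket convention used here, we get $L=-\tilde N$. This reproduces \eqref{eq:Killing_vector}, including the raised indices $b^{k}=g^{kj}b_{j}$.

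\emph{Step 4: the converse.} I reverse Steps 3, 2 and 1. Given $K=e^{k}p_{k}$ with $\{H,K\}=LK$ and $L$ linear in momenta, pull back to $J=e_{i}v^{i}$ with $N$ linear in velocities. Set $Z=J/v^{1}=e_{1}+e_{2}u$. Step 1 gives $\mathcal{X}_{p}Z=lZ$ with
\[
l=\frac{N}{v^{1}}+\frac{\Gamma^{1}_{jk}v^{j}v^{k}}{(v^{1})^{2}}.
\]
The first term is affine in $u$ and the second is quadratic in $u$, so $l$ is a quadratic polynomial in $u=y_{x}$, as claimed.
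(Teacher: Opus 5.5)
Your proposal is correct and follows the paper's proof essentially step for step. The shared skeleton is: homogenize $Z$ by the spray invariant $v^{1}$ (cofactor $-\Gamma^{1}_{jk}v^{j}v^{k}/v^{1}$); read off $b_{3}=\Gamma^{1}_{22}$ from the $u^{3}$-coefficient so the cofactor becomes polynomial; push forward by the Legendre map with a sign change for $\{H,\cdot\}$; and reverse these steps for the converse.

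Two small corrections. First, in your Step 2 display the factor should be $J/v^{1}$, not $J/(v^{1})^{2}$; as written your $N$ would not be polynomial, and it contradicts your own formula for $l$ in Step 4. Second, in the case $e_{2}\equiv0$ that you flag (which the paper's proof silently skips), nothing forces $\Gamma^{1}_{22}=0$. Any nonvanishing $e_{1}(x,y)$ is an invariant with $b_{3}=0$, and $J=e_{1}v^{1}$ has a polynomial cofactor only when $\Gamma^{1}_{22}=0$. So the statement actually fails there when $\Gamma^{1}_{22}\neq0$, and that case must be excluded by hypothesis rather than treated.
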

\begin{proof}
Suppose that coefficients of \eqref{eq:cubic_eq} satisfy \eqref{eq:projection} for some metric $\mathfrak{g}$ and also equation \eqref{eq:cubic_eq} has an invariant $Z=\sum_{i=1}^{2}e_{i}(x,y)u^{i-1}$ with the cofactor $l=\sum_{m=1}^{3}b_{m}(x,y)u^{m-1}$. Substituting the expression for the invariant and the cofactor in the respective definition and collecting coefficients at the same powers of $u$ one can find that $b_{3}=\Gamma^{1}_{22}$.

Consider $\tilde{J}=Z|_{z=\frac{v}{u}}$ and $\tilde{N}=v^{1}l|_{z=\frac{v^{2}}{v^{1}}}$. Then, compute $X_{\Gamma}\tilde{J}=\tilde{N}\tilde{J}$ and $XR=lR$ collecting coefficients at the powers of $v^{j}$ and $u$, respectively. Comparing the equations for $e_{i}$ and $b_{m}$ obtained in this way one can demonstrate that if $XR=lR$ is satisfied, then so is $X_{\Gamma}\tilde{J}=\tilde{N}\tilde{J}$ and vice versa. 

Both $\tilde{J}$ and $\tilde{N}$ are rational functions of $v^{1}$. Note also that $v^{1}$ is the invariant of $X_{\Gamma}$ with the cofactor $-(\Gamma^{1}_{jk}v^{j}v^{k})/v^{1}$. Then consider the invariant $J=\tilde{J}v^{1}=e_{i}v^{i}$, which is a linear homogeneous polynomial in $v^{i}$, $i=1,2$. The corresponding cofactor has the form $N=b_{i}v^{i}-\Gamma^{1}_{1i}v^{i}-\Gamma^{1}_{12}v^{2}$ (recall that $b_{3}=\Gamma^{1}_{22}$). As a consequence, we have that the geodesic spray $X_{\Gamma}$ has a linear homogeneous invariant in velocities with a linear homogeneous in velocities cofactor.

Finally, it is known that the Legendre transformations $v^{j}=g^{jk}p_{k}$ map the geodesic spray into the Hamiltonian flow $\{\cdot,H\}$ (see, e.g. \cite{Arnold_book}). Recall also that in the definition of the relative Killing vector the flow $\{H,\cdot\}=-\{\cdot,H\}$ is used and, consequently, the cofactor changes the sign. As a result, we obtain expressions \eqref{eq:Killing_vector}. 

The consideration above can be repeated in the reverse order to demonstrate that for any Killing vector of \eqref{eq:geodesics_H} there is an invariant that is linear in $u=y_{x}$ of the metrisable projective structure. This completes the proof.
\end{proof}

Proposition \ref{p:p_KV} is used in Section \ref{sec:conf} to map invariants of \eqref{eq:cubic_osc}, which are linear in $u=y_{x}$, to the relative Killing vectors of integrable Hamiltonian geodesic flows \eqref{eq:geodesics_H}.

\section{Cubic oscillators linearizable via nonlocal transformations and projective symmetries}
\label{sec:symm}

In this section we briefly present the results on the linearizability of family \eqref{eq:cubic_osc} with respect to nonlocal transformations obtained in \cite{Sinelshchikov2020,Sinelshchikov2020a}. We demonstrate that the linearizability is equivalent to the existence of a cubic rational, with respect to $u$, integrating factor and a  first integral, and, in  a particular case, a non-autonomous first integrals.

As a linear equation we consider
\begin{equation}\label{eq:linear_eq}
  w_{\xi\xi}+\beta w_{\xi}+\alpha w=0.
\end{equation}
Here $\alpha\beta\neq0$ are arbitrary parameters. 

We use the following family of nonlocal transformations
\begin{equation}\label{eq:GNT}
  w=F(y), \quad d\xi=[G_{1}(y)y_{x}+G_{2}(y)]dx,
\end{equation}
where it is assumed that $F$, $G_{1}$ and $G_{2}$ are sufficiently smooth functions and $F_{y}G_{1}G_{2}\neq0$, since otherwise transformations \eqref{eq:GNT} either degenerate or are reduced to point or generalized Sundman ones.

Transformations \eqref{eq:GNT} are a combination of point, contact, and Sundman transformations. They preserve autonomous projective structures, i.e., the family of equations \eqref{eq:cubic_osc} is closed under \eqref{eq:GNT}. If $G_{2}=0$ then \eqref{eq:GNT} reduce to point transformations. If $G_{1}=0$, then \eqref{eq:GNT} are generalized Sundman transformations and if, in addition, $F(y)=y$ then \eqref{eq:GNT} are Sundman transformations. It is clear that point transformations preserve both the metrisability and the projective Lie algebra of \eqref{eq:cubic_osc}. However, the Sundman transformations, as a special case, and transformations \eqref{eq:GNT} in general, neither preserve the metrisability of \eqref{eq:cubic_osc} nor its projective Lie algebra (see, e.g. \cite{Carinena2022}). 

This allows us to propose the following approach for constructing integrable projective structures. Suppose we start from a trivially metrisable and integrable projective vector field, for instance, a linear equation, and construct its equivalence class with respect to transformations \eqref{eq:GNT}. Transformations \eqref{eq:GNT} preserve the integrability of  \eqref{eq:cubic_osc}, since they map autonomous first integrals to autonomous first integrals \cite{Sinelshchikov2021}. However, metrisability is not preserved, and thus we need to find the intersection between the equivalence class and metrisability conditions. If this intersection is not empty, we obtain families of projective structures that are both integrable and metrisable. Finally, we lift the corresponding first integral and/or invariant of the projective connection to a first integral and/or invariant of the geodesic flow. Here. we apply this approach to linear equation \eqref{eq:linear_eq} and to the autonomous case of the projective connection \eqref{eq:cubic_osc}. We plan to consider the general case of the projective connections that are linearizable via nonlocal transformations elsewhere.

Further we need the following simple result:

\begin{proposition}
  Linear equation \eqref{eq:linear_eq} possesses two first order polynomial invariant curves
\begin{equation}\label{eq:linear_IC}
  \widetilde{Z}_{1,2}=2v+(\beta\pm\sqrt{\beta^{2}-4\alpha})w, \quad \lambda_{1,2}=-\frac{\beta\mp\sqrt{\beta^{2}-4\alpha}}{2}.
\end{equation}
If, in addition the following relation on the parameters $\alpha$, $\beta$ holds
\begin{equation}\label{eq:alpha_def}
2\beta^{2} -9\alpha=0,
\end{equation}
then \eqref{eq:linear_eq} has a second order polynomial invariant curve
\begin{equation}\label{eq:linear_IC_1}
  \widetilde{Z}_{3}=c_{1}\widetilde{Z}_{1}^{2}+c_{2}\widetilde{Z}_{2}, \quad \lambda_{3}=-\frac{2\beta}{3}, \quad c_{1}c_{2}\neq0.
\end{equation}
The integrating factor of \eqref{eq:linear_eq} corresponding to \eqref{eq:linear_IC} has the form
\begin{equation}\label{eq:linear_M}
  \widetilde{M}=(\widetilde{Z}_{1}\widetilde{Z}_{2})^{-1},
\end{equation}
while the first integral is
\begin{equation}\label{eq:linear_FI}
  \widetilde{R}=\widetilde{Z}_{1}^{\sqrt{\beta^{2}-4\alpha}+\beta}\widetilde{Z}_{2}^{\sqrt{\beta^{2}-4\alpha}-\beta}.
\end{equation}
We use the notation $v=w_{\xi}$ in all of the above expressions.
\end{proposition}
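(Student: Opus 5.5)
The whole statement reduces to direct verification with the vector field $\widetilde{\mathcal{X}}=v\partial_{w}-(\beta v+\alpha w)\partial_{v}$ of \eqref{eq:linear_eq}. The plan is to compute the action of $\widetilde{\mathcal{X}}$ on linear and quadratic polynomials in $(w,v)$, and then use the standard Darboux rules. Write $\mu_{\pm}=\beta\pm\sqrt{\beta^{2}-4\alpha}$, so that $\mu_{+}+\mu_{-}=2\beta$ and $\mu_{+}\mu_{-}=4\alpha$.

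\emph{Linear invariants.} For a linear polynomial $Z=2v+\mu w$ we have
\[
\widetilde{\mathcal{X}}Z=-2\beta v-2\alpha w+\mu v .
\]
Requiring $\widetilde{\mathcal{X}}Z=\lambda Z$ with a constant $\lambda$ gives two conditions: $\lambda=(\mu-2\beta)/2$ from the $v$-coefficient, and $\lambda\mu=-2\alpha$ from the $w$-coefficient. Eliminating $\lambda$ yields $\mu^{2}-2\beta\mu+4\alpha=0$, so $\mu=\mu_{\pm}$ and $\lambda=(\mu_{\pm}-2\beta)/2$. This gives $\widetilde{Z}_{1,2}$ in \eqref{eq:linear_IC}.

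\emph{Integrating factor and first integral.} The cofactor identity gives
\[
\operatorname{div}\widetilde{\mathcal{X}}=-\beta=\lambda_{1}+\lambda_{2} .
\]
For $\widetilde M=(\widetilde Z_1\widetilde Z_2)^{-1}$ we then obtain
\[
\operatorname{div}(\widetilde M\widetilde{\mathcal{X}})=\widetilde M\bigl(-(\lambda_1+\lambda_2)+\operatorname{div}\widetilde{\mathcal{X}}\bigr)=0 ,
\]
which proves \eqref{eq:linear_M}. For $\widetilde R=\widetilde Z_1^{p}\widetilde Z_2^{q}$ the cofactor is $p\lambda_1+q\lambda_2$, so we only need to check that it vanishes. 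Set $s=\sqrt{\beta^{2}-4\alpha}$, so that $\lambda_{1}=(s-\beta)/2$ and $\lambda_{2}=-(\beta+s)/2$. Taking $p=s+\beta$ and $q=s-\beta$ gives
\[
p\lambda_1+q\lambda_2=\tfrac12\bigl[(s+\beta)(s-\beta)-(s-\beta)(s+\beta)\bigr]=0 .
\]
This proves \eqref{eq:linear_FI}.

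\emph{Quadratic invariant, the only step with content.} Products of invariants are invariants with additive cofactors. Hence $\widetilde Z_1^{2}$ has cofactor $2\lambda_1$, and $\widetilde Z_2$ has cofactor $\lambda_2$. The combination $c_1\widetilde Z_1^{2}+c_2\widetilde Z_2$ with $c_1c_2\neq0$ is an invariant exactly when $2\lambda_1=\lambda_2$, since the two summands are linearly independent polynomials of different degree.

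Next we translate this condition into one on $\alpha,\beta$. The equation $2\lambda_{1}=\lambda_{2}$ reads $s-\beta=-(\beta+s)/2$, that is $\beta=3s$. Squaring gives $\beta^{2}=9(\beta^{2}-4\alpha)$, i.e. $2\beta^{2}=9\alpha$, which is \eqref{eq:alpha_def}. The common cofactor is then $\lambda_2=-(\beta+\beta/3)/2=-2\beta/3$, matching $\lambda_{3}$ in \eqref{eq:linear_IC_1}.

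The one point to watch is the branch of the square root. Relation \eqref{eq:alpha_def} only gives $s=|\beta|/3$. If $\beta<0$, the labels $1,2$ must be interchanged, or equivalently the sign convention of $\sqrt{\cdot}$ chosen accordingly, so that the condition $2\lambda_1=\lambda_2$ holds. With this convention fixed, $\lambda_3=-2\beta/3$ in either case.
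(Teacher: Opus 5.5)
Your proof is correct and takes essentially the paper's route: a direct check, using constant cofactors for the degree-one vector field of the linear equation, that $\widetilde{Z}_{1,2}$ are invariant and, when $2\beta^{2}=9\alpha$, that $\widetilde{Z}_{3}$ is too, after which $\widetilde{M}$ and $\widetilde{R}$ are assembled from the cofactors. Two points go beyond the paper's sketch, which passes over the sign issue: you get $\widetilde{Z}_{3}$ cleanly from cofactor additivity, and you observe that for $\beta<0$ the roles of $\widetilde{Z}_{1}$ and $\widetilde{Z}_{2}$ (equivalently, the branch of $\sqrt{\beta^{2}-4\alpha}$) must be swapped.
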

\begin{proof}
It is not difficult to demonstrate that the highest degree of an invariant curve of \eqref{eq:linear_eq} is 2. It is also clear that the cofactor is a constant, since the degree of the vector field associated to \eqref{eq:linear_eq} is 1. Thus, it is straightforward to verify that the only possible first order invariant curves of \eqref{eq:linear_eq} are \eqref{eq:linear_IC} and an irreducible second order invariant exists if and only if \eqref{eq:alpha_def} holds and has the from \eqref{eq:linear_IC_1}. Integrating factor \eqref{eq:linear_M} and first integral \eqref{eq:linear_FI} can be directly found from  invariant curves \eqref{eq:linear_IC} This completes the proof.
\end{proof}

Now we present equivalence criterion for \eqref{eq:cubic_eq} and linear equation \eqref{eq:linear_eq} via transformations \eqref{eq:GNT}. Below we use the following notations to simplify the presentation of the results
\begin{equation}\label{eq:cubic_notation}
  A=27kg^2-9hfg  + 2f^3 + 9gf_{y}- 9fg_{y}, \quad B=3gh-f^2 + 3g_{y}.
\end{equation}

The following proposition holds (see Theorem 2.1 and Corollary 2.1 in \cite{Sinelshchikov2020}):
\begin{proposition}
\label{p:p2}
  Suppose that $g\neq0$. Then the following statements are equivalent:
   \begin{enumerate}
    \item[1)] equation \eqref{eq:cubic_osc} can be transformed into \eqref{eq:linear_eq} via \eqref{eq:GNT}, where the functions $F$ and $G$ are given by
    \begin{equation}\label{eq:cubic_lc_tr_gen}
      F_{y}=\frac{\alpha F G_{2}^{2}}{g}, \quad G_{1}=\frac{G_{2}(f-\beta G_{2})}{3g},
    \end{equation}
    and $G_{2}$ satisfies by one of the following correlations
     \begin{itemize}
                 \item[1a)]
                 \begin{equation}\label{eq:cubic_lc_tr_G2_1}
                 G_{2}=\frac{(\beta^2 - 3\alpha)A^{2}}{\beta(2\beta^2 - 9\alpha)(gA_{y}-BA)}, \quad A(\beta^2 - 3\alpha)( 2\beta^2 - 9\alpha)(gA_{y}-BA)\neq0;
                 \end{equation}
                 \item[1b)]
                  \begin{equation}
                 \label{eq:cubic_lc_tr_G2_2}
                  G_{2}^{3}=-\frac{A}{\beta^{3}}, \quad  \beta^2 - 3\alpha=0, \quad A\neq0;
                        \end{equation}
                 \item [1c)]
                     \begin{equation}  \label{eq:cubic_lc_tr_G2_3}
                    G_{2,y}=\frac{G_{2}(\beta^{2}G_{2}^{2}+3B)}{9g},\quad 2\beta^2 - 9\alpha=0,
                           \end{equation}

               \end{itemize}

    \item[2)] the functions $k$, $h$, $f$, $g$ and the parameters $\alpha$ and $\beta$ satisfy one of the relations
               \begin{itemize}
                 \item[2a)]
                 \begin{equation}\label{eq:cubic_lc_1}
                 \begin{gathered}
                    \beta^2(2\beta^2 - 9\alpha)^2(g A_{y} - BA)^3 - A^{5}(\beta^2 - 3\alpha)^{3}=0,\\
                   (2\beta^2 - 9\alpha) (\beta^2 - 3\alpha)A(gA_{y}-BA)\neq0;
                    \end{gathered}
                 \end{equation}
                 \item[2b)]
                   \begin{gather}
                    \beta^2 - 3\alpha=0,\label{eq:cubic_lc_2_p} \\
                    g A_{y} - BA=0 \label{eq:cubic_lc_2};
                 \end{gather}
                 \item [2c)]
                    \begin{gather}
                    2\beta^2 - 9\alpha=0, \label{eq:cubic_lc_3_p} \\
                    A=0  \label{eq:cubic_lc_3};
                 \end{gather}

               \end{itemize}

    \item[3)] equation \eqref{eq:cubic_osc} has a rational and cubic with respect to $u=y_{x}$ integrating factor
  \begin{equation}\label{eq:lc_M}
   M=\frac{gG_{2}^{2}}{[(f-\beta G_{2})u+3g][(2\beta-9\alpha^{2})G_{2}^{2}u^{2}-\beta G_{2}(fu+3g)u-(fu+3g)^{2}] };
  \end{equation}
    \item[4)] equation \eqref{eq:cubic_osc} possesses a first integral
 \begin{multline}\label{eq:lc_FI}
R=F^{2\rho}[(f-\beta G_{2})u+3g]^{-2\rho}\left(6\alpha G_{2}u+(\beta +\rho) [(f-\beta G_{2})u+3g]\right)^{\rho+\beta} \times \\  \left(6\alpha G_{2}u+(\beta -\rho) [(f-\beta G_{2})u+3g]\right)^{\rho-\beta}, \quad \rho=\sqrt{\beta^{2}-4\alpha}
  \end{multline}
  \end{enumerate}
\end{proposition}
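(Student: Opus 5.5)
The plan is to prove the chain $1)\Rightarrow 3)\Rightarrow 4)$ and $1)\Leftrightarrow 2)$, and then close the loop with $3)$ or $4)\Rightarrow 1)$.

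\textbf{Step 1: $1)\Leftrightarrow 2)$.} Substitute \eqref{eq:GNT} into \eqref{eq:linear_eq}. With $d\xi=(G_{1}u+G_{2})dx$ we have $w_{\xi}=F_{y}u/(G_{1}u+G_{2})$. Differentiating once more gives an equation of the form \eqref{eq:cubic_osc} with coefficients rational in $F_{y},F_{yy},G_{1},G_{1,y},G_{2},G_{2,y},F$. Clearing the denominator and equating coefficients of $u^{0},\dots,u^{3}$ with $g,f,h,k$ yields four equations.
\begin{itemize}
\item The $u^{0}$ and $u^{1}$ equations are algebraic in $F_{y}/F$ and $G_{1}$. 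They give \eqref{eq:cubic_lc_tr_gen}; this uses $g\neq0$ and the fact that $w=F$ enters through $\alpha F$.
\item Using these, the $u^{2}$ equation becomes a Riccati-type first order ODE for $G_{2}$, namely \eqref{eq:cubic_lc_tr_G2_3} up to an extra term.
\item The $u^{3}$ equation then becomes an algebraic relation linking $G_{2},G_{2,y}$ and $A$. Eliminating $G_{2,y}$ between it and the $u^{2}$ equation leaves $\beta(2\beta^{2}-9\alpha)G_{2}^{3}$-type terms plus $A$. Splitting on whether $2\beta^{2}-9\alpha$ and $\beta^{2}-3\alpha$ vanish should give three branches.
\end{itemize}
In the generic branch, differentiate the algebraic relation once more and combine it with the ODE. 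This solves for $G_{2}$ rationally as \eqref{eq:cubic_lc_tr_G2_1}, and substituting back gives the compatibility condition \eqref{eq:cubic_lc_1}. If $\beta^{2}=3\alpha$, the relation collapses to $G_{2}^{3}=-A/\beta^{3}$, and the ODE becomes \eqref{eq:cubic_lc_2}. If $2\beta^{2}=9\alpha$, the relation forces $A=0$ and leaves \eqref{eq:cubic_lc_tr_G2_3} free. Conversely, given 2a)–2c) we define $G_{2}$, $G_{1}$, $F$ by these formulas; $F$ is obtained by quadrature. All four coefficient equations then hold.

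\textbf{Step 2: $1)\Rightarrow 3),4)$.} Pull back the integrating factor \eqref{eq:linear_M} and first integral \eqref{eq:linear_FI} through \eqref{eq:GNT}. Here $\widetilde{Z}_{1,2}$ become $F\cdot(\text{linear in }u)/(G_{1}u+G_{2})$, and $G_{1}$ is eliminated via \eqref{eq:cubic_lc_tr_gen}, which produces the factor $(f-\beta G_{2})u+3g$.
\begin{itemize}
\item First integrals go to first integrals. After multiplying by a constant power, $\widetilde R$ becomes \eqref{eq:lc_FI}.
\item For the integrating factor, note that $\mathrm{div}(\widetilde M\widetilde{\mathcal X})=0$ transforms with the Jacobian of $(w,v)\mapsto(y,u)$ and the time rescaling $G_{1}u+G_{2}$. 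So $M=\widetilde M\,\partial(w,v)/\partial(y,u)\cdot(G_{1}u+G_{2})^{-1}$. A direct computation should give \eqref{eq:lc_M}, with the cubic denominator being the product of the transformed $\widetilde Z_{1}\widetilde Z_{2}$ factors and the factor $(f-\beta G_{2})u+3g$.
\end{itemize}

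\textbf{Step 3: $3)$ or $4)\Rightarrow 1)$, the main obstacle.} This is the hard direction. I would argue from 3): require $\mathrm{div}(M\mathcal X)=0$ with $M$ of the form \eqref{eq:lc_M}, where $G_{2}$ is an unknown function. Collecting powers of $u$ in the numerator should reproduce exactly the $u^{2}$ and $u^{3}$ coefficient equations of Step 1, hence 2). Given 4), the logarithmic derivative of $R$ along $\mathcal X$ gives a rational identity in $u$. The vanishing of its residues at the three linear factors yields the same system. So both reduce to Step 1.

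I expect the real difficulty to be the elimination in the generic case 2a): deriving the closed form \eqref{eq:cubic_lc_tr_G2_1} and the quintic relation. I would first try eliminating $G_{2,y}$ with a resultant, then verify the result using the known answer from \cite{Sinelshchikov2020}.
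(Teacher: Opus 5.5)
The paper does not prove this proposition; it only cites Theorem 2.1 and Corollary 2.1 of \cite{Sinelshchikov2020}. Your outline is the natural argument: match coefficients under \eqref{eq:GNT}, then transport $\widetilde M$ and $\widetilde R$. Steps 1 and 2 are correct, and they are shorter than you fear.

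After imposing \eqref{eq:cubic_lc_tr_gen}, the $u^{2}$ coefficient gives a Bernoulli equation (not a Riccati one): $3gG_{2,y}=G_{2}\bigl(B+(\beta^{2}-3\alpha)G_{2}^{2}\bigr)$. Inserting it into the $u^{3}$ coefficient leaves exactly $A=\beta(2\beta^{2}-9\alpha)G_{2}^{3}$, with no $G_{2,y}$ remaining. Differentiating this relation and using the Bernoulli equation gives $(\beta^{2}-3\alpha)AG_{2}^{2}=gA_{y}-AB$. Dividing the two relations gives \eqref{eq:cubic_lc_tr_G2_1}, and cubing gives \eqref{eq:cubic_lc_1}. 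Branches 1b) and 1c) can be read off directly, so no resultant is needed.

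Your transport formula $M=\widetilde M\,\det\frac{\partial(w,v)}{\partial(y,u)}\,(G_{1}u+G_{2})^{-1}$ evaluates to
\[
M=-\frac{27\alpha}{4}\,\frac{gG_{2}^{2}}{[(f-\beta G_{2})u+3g]\,[(2\beta^{2}-9\alpha)G_{2}^{2}u^{2}-\beta G_{2}u(fu+3g)-(fu+3g)^{2}]}.
\]
So the factor $(2\beta-9\alpha^{2})$ printed in \eqref{eq:lc_M} is a typo for $2\beta^{2}-9\alpha$; this is consistent with \eqref{eq:cubic_M_3}. Do not expect your computation to reproduce the printed formula literally.

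For the converse you need no new elimination. Fix $G_{2}\neq0$, define $F$ and $G_{1}$ by \eqref{eq:cubic_lc_tr_gen}, and pull the linear field back. The result $\widetilde{\mathcal X}$ has $u^{0},u^{1}$ coefficients $g,f$ and $u^{2},u^{3}$ coefficients $\tilde h,\tilde k$ given by your Step 1 expressions. It admits $M$ as an integrating factor and $R$ as a first integral. Since $\mathcal X=\widetilde{\mathcal X}-u^{2}\bigl((k-\tilde k)u+h-\tilde h\bigr)\partial_{u}$, we get:
\begin{itemize}
\item $\mathrm{div}(M\mathcal X)=-\partial_{u}\bigl[Mu^{2}\bigl((k-\tilde k)u+h-\tilde h\bigr)\bigr]$;
\item $\mathcal XR=-u^{2}\bigl((k-\tilde k)u+h-\tilde h\bigr)\partial_{u}R$.
\end{itemize}
The cubic denominator of $M$ has constant term $-27g^{3}\neq0$, so the first expression vanishes only if $k=\tilde k$ and $h=\tilde h$. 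That is 3)$\Rightarrow$1), and it confirms your claim for that direction.

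The second expression requires $\partial_{u}R\not\equiv0$, and here there is a genuine gap, which the printed statement shares. Suppose $\beta^{2}=4\alpha$, i.e. $\rho=0$. This value is covered by case 2a), because then $\beta^{2}-3\alpha=\alpha$ and $2\beta^{2}-9\alpha=-\alpha$ are both nonzero. In \eqref{eq:lc_FI} the first two factors get exponent $0$, and the last two coincide with exponents $\beta$ and $-\beta$, so $R\equiv1$. Then 4) holds for every oscillator, 4)$\Rightarrow$1) is false, and all your residues vanish. You must add the hypothesis $\beta^{2}\neq4\alpha$ to anything involving 4).

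Under that hypothesis the argument closes. The roots of the linear factors are pairwise distinct because $\alpha G_{2}\rho\neq0$, so $\partial_{u}\ln R$ has at least two simple poles with nonzero residues among $-2\rho$ and $\rho\pm\beta$. Hence $\partial_{u}R\not\equiv0$, 4)$\Rightarrow$1) follows, and your residue argument also goes through.
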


It is also worth noting that correlation \eqref{eq:cubic_lc_tr_G2_3} is the Bernoulli differential equation for the function $G_{2}$, which can be explicitly solved. We do not provide the corresponding expression in order to simplify the presentation of the results.

In what follows, we also use the notation
\begin{equation}\label{eq:C_def}
  B=\frac{3}{2} \left(g_{y}+\frac{gC_{yy}}{C_{y}}\right), \quad C_{y}\neq0,
\end{equation}
which, in terms of the functions $h$, $f$ and $g$ is equivalent to
\begin{equation}\label{eq:C_sol}
  C=\int g(y)\exp\left\{2\int \left(h-\frac{f^{2}}{3g}\right)dy\right\}dy.
\end{equation}

The following statement holds:
\begin{corollary}
If condition \eqref{eq:alpha_def} holds, then linearizable via \eqref{eq:GNT} equations have a rational cubic integrating factor
\begin{equation}\label{eq:cubic_M_3}
M=\frac{gG_{2}^{2}}{(fu+3g)([\beta G_{2}+f]u+3g)([\beta G_{2}-f]u-3g)},
\end{equation}
rational quadratic with respect to $u$ first integral
\begin{equation}\label{eq:cubic_FI_3}
R=\frac{9gC_{y}u^{2}}{(fu+3g)^{2}}+2C,
\end{equation}
and the non-autonomous  first integral
\begin{equation}\label{eq:cubic_Inv_3}
\mathcal{R}=\int\limits_{0}^{y}\left(\frac{f(\omega)}{3g(\omega)}-\sqrt{\frac{C^{'}(\xi)}{g(\omega)[R-2C(\omega)]}}\right)d\omega+x,
\end{equation}
where $R$ is given in \eqref{eq:cubic_FI_3} and $\omega$ is an integration variable.
\end{corollary}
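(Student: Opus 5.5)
In case 2c) of Proposition \ref{p:p2}, i.e. $2\beta^{2}=9\alpha$ and $A=0$, I will pull back the objects of the linear equation \eqref{eq:linear_eq} through the transformation \eqref{eq:GNT}, using \eqref{eq:cubic_lc_tr_gen} and the Bernoulli equation \eqref{eq:cubic_lc_tr_G2_3} for $G_{2}$. Under \eqref{eq:GNT} we have $w=F(y)$ and $v=w_{\xi}=F_{y}u/(G_{1}u+G_{2})$. With $F_{y}=\alpha FG_{2}^{2}/g$ and $G_{1}=G_{2}(f-\beta G_{2})/(3g)$ this becomes
\[
v=\frac{3\alpha F G_{2}u}{(f-\beta G_{2})u+3g}.
\]
Substituting into $\widetilde Z_{1,2}$ from \eqref{eq:linear_IC} gives polynomial-in-$u$ invariants, up to the common denominator. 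When $2\beta^{2}=9\alpha$ we get $\rho=\sqrt{\beta^{2}-4\alpha}=\beta/3$ (up to sign), so the roots are $\lambda_{1,2}=-\beta/3,\,-2\beta/3$. A short computation should then show that the three linear factors in the denominator of \eqref{eq:lc_M} become $fu+3g$, $(\beta G_{2}+f)u+3g$ and $(\beta G_{2}-f)u-3g$, which gives \eqref{eq:cubic_M_3}. Equivalently, I can specialize \eqref{eq:lc_M} directly at $\alpha=2\beta^{2}/9$ and factor the quadratic, using that its discriminant becomes a perfect square.

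For the rational first integral \eqref{eq:cubic_FI_3}, I will not go through \eqref{eq:lc_FI}, since its exponents $\rho\pm\beta$ give only the ratio $4:2$. Instead I will use the quadratic invariant $\widetilde Z_{3}=c_{1}\widetilde Z_{1}^{2}+c_{2}\widetilde Z_{2}$, whose cofactor is $-2\beta/3=2\lambda$. Then $\widetilde Z_{1}^{2}/\widetilde Z_{3}$ or a similar ratio is a rational first integral, and pulling it back gives something quadratic over $(fu+3g)^{2}$. Identifying the coefficients with $9gC_{y}$ and $2C$ requires $C$ from \eqref{eq:C_def}. The cleanest route is verification: apply $\mathcal{X}=u\partial_{y}-(ku^{3}+hu^{2}+fu+g)\partial_{u}$ to $R$ and collect powers of $u$. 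The numerator has degree about 4 in $u$. The coefficients of the lower powers vanish by the definition $C_{yy}/C_{y}=2B/(3g)-g_{y}/g$, i.e. by \eqref{eq:C_def}, together with $B=3gh-f^{2}+3g_{y}$. The top coefficient is proportional to $A$, so it vanishes exactly by \eqref{eq:cubic_lc_3}. This check is the step I expect to be the main obstacle, namely making the top coefficient reduce exactly to a multiple of $A$. I will handle it by writing $h$ in terms of $C$ via \eqref{eq:C_sol}, so that $h=f^{2}/(3g)+C_{yy}/(2C_{y})-g_{y}/(2g)$, and eliminating $k$ through $A=0$.

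For the non-autonomous integral \eqref{eq:cubic_Inv_3}, I will solve $R=\mathrm{const}$ for $u$, which gives $3\sqrt{gC_{y}}\,u=\pm\sqrt{R-2C}\,(fu+3g)$. Hence
\[
\frac{dx}{dy}=\frac1u=\frac{f}{3g}\pm\sqrt{\frac{C_{y}}{g(R-2C)}}.
\]
Integrating in $y$ at fixed $R$ gives $x-\int(\dots)\,d\omega=\mathrm{const}$, so $\mathcal R$ satisfies $\mathcal R_{x}+\mathcal X\mathcal R=0$ after accounting for the dependence of $R$ on $u$. The sign choice and the integrand variable (the $C'(\xi)$ should read $C'(\omega)$) are handled by the branch convention.
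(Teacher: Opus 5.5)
Your derivations of $M$ and $R$ are correct, but the last step for $\mathcal R$ has a sign error that no choice of branch can absorb. From $3\sqrt{gC_y}\,u=\pm\sqrt{R-2C}\,(fu+3g)$, dividing by $3gu\sqrt{R-2C}$ gives $\sqrt{C_y/(g(R-2C))}=\pm\bigl(\tfrac{f}{3g}+\tfrac1u\bigr)$. Hence
\[
\frac{dx}{dy}=\frac1u=-\frac{f}{3g}\pm\sqrt{\frac{C_y}{g(R-2C)}} ,
\]
not $+\frac{f}{3g}\pm\cdots$. With your sign, $x-\int(\frac{f}{3g}\pm\sqrt{\cdot}\,)\,d\omega$ is not a first integral whenever $f\neq0$, for either sign of the root, because $f/(3g)$ sits outside the square root.

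With the correct sign you recover exactly \eqref{eq:cubic_Inv_3}, on the branch $\sqrt{C_y/(g(R-2C))}=(fu+3g)/(3gu)$. The verification is immediate: since $\mathcal X R=0$, the $R$-dependence drops out, and
\[
\mathcal R_x+\mathcal X\mathcal R=1+u\Bigl(\tfrac{f}{3g}-\tfrac{fu+3g}{3gu}\Bigr)=0 .
\]

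The paper gives no separate proof. The corollary is read off from Proposition \ref{p:p2} at $\rho=\beta/3$, which is what you do for $M$. Note that \eqref{eq:lc_M} is misprinted: the coefficient of $G_2^2u^2$ must be $2\beta^2-9\alpha$, as the product of your two pulled-back factors shows. At $2\beta^2=9\alpha$ the quadratic then reduces to $-(fu+3g)\bigl((f+\beta G_2)u+3g\bigr)$, so no discriminant argument is needed.

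For $R$, your direct verification is a genuinely different and more economical route. It gives $\mathcal X R=-2AC_yu^4/(fu+3g)^3$, so $R$ is a first integral exactly when $A=0$, without reference to $F$, $G_2$ or the Bernoulli equation \eqref{eq:cubic_lc_tr_G2_3}.

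Your stated reason for avoiding \eqref{eq:lc_FI} is backwards, though. The exponent ratio $(\rho+\beta):(\rho-\beta)=4:(-2)$ is precisely what makes \eqref{eq:lc_FI} the $(2\beta/3)$-th power of a rational function. Up to a constant factor, that function is
\[
\frac{F(fu+3g)^2}{(fu+3g)^2-\beta^2G_2^2u^2}.
\]
Its reciprocal is affine in $R$. This follows because $F_y=\alpha FG_2^2/g$, together with \eqref{eq:cubic_lc_tr_G2_3} and \eqref{eq:C_def}, gives $(\ln(G_2^2/(gF)))_y=C_{yy}/C_y$, so $1/F$ is affine in $C$.

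That is the derivation implicit in the paper. It explains where $R$ comes from, whereas yours shows the stronger fact that only $A=0$ is needed.
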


Now we obtain a classification of possible dimensions of Lie algebras admitted by \eqref{eq:cubic_osc} when condition \eqref{eq:cubic_lc_3} is satisfied. This classification is used below to find cases of the Darboux superintegrable and flat metrics that correspond to \eqref{eq:cubic_osc} under \eqref{eq:cubic_lc_3}. To this aim, we use the results of classification of Lie point symmetries of the projective equations given in \cite{Bagderina2016}. In \cite{Bagderina2016} all equations of the from \eqref{eq:cubic_eq} are classified in terms of their relative and basis differential invariants of point transformations. The dimension of the admitted Lie algebra can be determined by basis algebraic invariants. Below we compute relative and absolute invariants for \eqref{eq:cubic_eq} when condition \eqref{eq:cubic_lc_3} is met.

Recall also that projective equation \eqref{eq:cubic_eq} can be linearized via point transformations and $\dim{\mathfrak{p}(c)}=8$ if and only if the Liouville invariants $\mathcal{L}_{1}$ and $\mathcal{L}_{2}$ vanish simultaneously \cite{Liouville,Tresse,Kruglikov2009,Yumaguzhin2010}. For equation \eqref{eq:cubic_osc} the Liouville invariants have a very compact representation in terms of the functions $A$ and $B$ defined in \eqref{eq:cubic_notation}:
\begin{equation}\label{eq:Liouville_inv_co_AB}
  \mathcal{L}_{1}=-\frac{B_{y}}{3}, \quad \mathcal{L}_{2}=\frac{A_{y}-3fB_{y}}{27g},
\end{equation}
which is used in the proof of the following result:

\begin{theorem}
\label{th:thc1}
Suppose that correlation \eqref{eq:cubic_lc_3} holds and $g\neq0$. Then $\dim{\mathfrak{p}(c)}$ can be one, two, three or eight. In the generic case $\dim{\mathfrak{p}(c)}=1$.   We have that $\dim{\mathfrak{p}(c)}=2$ if and only if
\begin{equation}\label{eq:cubic_lc_3_2DLA}
  gBB_{yy} + g_{y}BB_{y} - 2gB_{y}^{2}=0, \quad  g B_{y}(9gB_{y}-4B^{2})\neq0.
\end{equation}

We have that $\dim{\mathfrak{p}(c)}=3$ and $\mathfrak{p}(c)$  is isomorphic to $\mathfrak{sl}(2,\mathbb{R})$  if and only if
  \begin{gather}\label{eq:cubic_lc_3_3DLA}
  9 g^{2} B_{yy}^{2}+9 g \left(B +2 g_{y}\right) B_{y} B_{yy}-25 g B_{y}^{3}-\left(4 B +3 g_{y}\right) \left(B -3 g_{y}\right) B_{y}^{2}=0, \quad g B_{y}\neq0,\\
  2gB_{yy} + BB_{y} + 2g_{y}B_{y}\neq0.
\end{gather}
Finally, $\dim{\mathfrak{p}(c)}=8$ and $\mathfrak{p}(c)$  is isomorphic to $\mathfrak{sl}(3,\mathbb{R})$ if and only if
  \begin{equation}\label{eq:cubic_lc_3_8DLA}
B_{y}=0, \quad g \neq0.
\end{equation}
\end{theorem}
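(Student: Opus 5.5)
The proof rests on the invariant-theoretic classification of projective equations in \cite{Bagderina2016}, specialised to \eqref{eq:cubic_osc} under \eqref{eq:cubic_lc_3}.

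\textbf{Symmetry $\partial_x$ and the flat case.} Since \eqref{eq:cubic_osc} is autonomous, $\partial_{x}$ is always a symmetry, so $\dim\mathfrak{p}(c)\geq1$. This already excludes the value $0$ from the Tresse list $\{0,1,2,3,8\}$. The case $\dim\mathfrak{p}(c)=8$ comes first. By \eqref{eq:Liouville_inv_co_AB} it is equivalent to $B_{y}=0$ and $A_{y}-3fB_{y}=0$. Under \eqref{eq:cubic_lc_3} we have $A\equiv0$, so $A_{y}=0$ and both Liouville invariants reduce to multiples of $B_{y}$. This gives \eqref{eq:cubic_lc_3_8DLA} immediately.

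\textbf{Reducing to functions of $y$.} For $B_{y}\neq0$, I eliminate $k$ using $A=0$:
\[
k=\frac{9hfg-2f^{3}-9gf_{y}+9fg_{y}}{27g^{2}}.
\]
The remaining data are then $g$, $f$, $h$, and $h$ enters only through $B$.

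Next I would use the freedom of point transformations. A change of the form $y\mapsto\phi(y)$ combined with $y_x$-linear gauge terms can be used to set $f=0$. I expect the classification to depend only on $g$ and $B$, as the statement suggests.

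With this normalisation I substitute into Bagderina's relative invariants (the Liouville $\mathcal{L}_1,\mathcal{L}_2$ and the next-order ones built from them). Everything then becomes a function of $y$ alone, expressed through $g$, $B$ and derivatives of $B$.

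\textbf{Identifying each dimension.} In the nonflat case the dimension is read off from how many functionally independent absolute invariants there are:
\begin{itemize}
\item $\dim=3$: all basis absolute invariants are constant.
\item $\dim=2$: exactly one nonconstant absolute invariant appears. More precisely, the second invariant is a function of the first in a way specific to the two-dimensional algebras, with a degenerate invariant condition.
\item $\dim=1$: the generic case, two independent absolute invariants exist, but because the equation is autonomous they are functions of $y$ only. Two such functions are automatically dependent, which is consistent with $\dim\ge1$.
\end{itemize}

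Concretely, I would build an absolute invariant $I_1(y)$, homogeneous of degree zero, as a ratio of a second-order relative invariant to a suitable power of $\mathcal{L}_1\propto B_y$. Setting $I_1'=0$ should reproduce the rational ODE \eqref{eq:cubic_lc_3_3DLA}, with the nondegeneracy condition $2gB_{yy}+BB_y+2g_yB_y\neq0$ identifying the case as $\mathfrak{sl}(2,\mathbb{R})$. The two-dimensional condition \eqref{eq:cubic_lc_3_2DLA} should arise from Bagderina's branch in which a specific relative invariant vanishes. I expect that invariant to be proportional to $gBB_{yy}+g_yBB_y-2gB_y^2$, with $9gB_y-4B^2\neq0$ separating it from the three-dimensional case.

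I would then check directly that no other branch of Bagderina's list is compatible with autonomy plus $A=0$. Finally, I would confirm sufficiency by exhibiting the extra symmetries, for example $x\partial_x+\eta(y)\partial_y$, under the stated conditions.

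\textbf{Main obstacle.} The hardest step is the heavy computation of Bagderina's higher relative invariants for this family and their factorisation into the compact forms in $g,B$. I would do it symbolically after the normalisation $f=0$, then undo the normalisation using the invariance of $B$ and $g$.
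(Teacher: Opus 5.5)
Your outer frame is the paper's. $\partial_x$ gives $\dim\mathfrak{p}(c)\geq1$, the flat case follows from \eqref{eq:Liouville_inv_co_AB} with $A\equiv0$ exactly as in the paper, and the rest is read off from Bagderina's invariants. The gap is in that last step. Your dictionary between dimensions and invariants has the roles of $2$ and $3$ reversed, so it cannot produce \eqref{eq:cubic_lc_3_2DLA} and \eqref{eq:cubic_lc_3_3DLA}.

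In Bagderina's scheme, $\dim\mathfrak{p}(c)=2$ holds iff all basis absolute invariants are constant, not when ``exactly one is nonconstant''. By contrast, $\dim\mathfrak{p}(c)=3$ is detected by vanishing relative invariants, $J_0=j_0=j_1=j_2=\Gamma_0j_3+5\mathcal{L}_1=0$ with $\mathcal{L}_1\neq0$. This is exactly where the basis invariants of the relevant classes cease to be defined.

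Accordingly, constancy ($I_{1,y}=I_{2,y}=0$ in the paper) is what yields \eqref{eq:cubic_lc_3_2DLA}. That condition is just $(gB_y/B^2)_y=0$, i.e. $gB_y=c_1B^2$. The inequality $9gB_y-4B^2\neq0$ removes $c_1=4/9$, where $j_1=\Gamma_0j_3+5\mathcal{L}_1=0$ and the dimension jumps to $3$.

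Conversely, \eqref{eq:cubic_lc_3_3DLA} is not $I_1'=0$ for a second-order absolute invariant, since that would be third order in $B$. It is the algebraic relation $\Gamma_0j_3+5\mathcal{L}_1=0$. The inequality $2gB_{yy}+BB_y+2g_yB_y\neq0$ does not identify $\mathfrak{sl}(2,\mathbb{R})$; the paper takes that identification from the literature. Up to the factor $9g$, that inequality's left side is the $B_{yy}$-derivative of \eqref{eq:cubic_lc_3_3DLA}. So it removes the singular solution of that quadratic ODE, on which $j_1\neq0$. Example 1 with $\nu=0$, $\mu=-4/3$ lies there and has $\dim\mathfrak{p}(c)=2$.

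You also miss the structural input that drives the paper's computation. Under $A=0$ the quantities $J_0$, $j_0$, $j_2$, $e_0$ vanish identically. This places the equation in Bagderina's fourth class (sixth if $j_1=0$) and fixes which invariants and criteria apply.

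Your normalisation is right in substance but misdescribed. The map $y\mapsto\phi(y)$ leaves $f$ unchanged, and point transformations have no ``$y_x$-linear gauge terms''. What kills $f$ is $x\mapsto x+\psi(y)$ with $\psi'=f/(3g)$. It preserves autonomy and leaves $g$, $A$, $B$ unchanged, so $A=0$ then forces $k=0$.

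One more step, $y\mapsto\phi(y)$ with $\phi''/\phi'=h-f^2/(3g)$, turns every equation satisfying \eqref{eq:cubic_lc_3} into $Y_{XX}+G(Y)=0$, with $B=3G'$ and $gB_y=3GG''$. The theorem then reduces to Lie's classification of $y''+G(y)=0$:
\begin{itemize}
\item $G$ affine gives $8$;
\item $G=a(y+b)+c(y+b)^{-3}$ with $c\neq0$ gives $\mathfrak{sl}(2,\mathbb{R})$;
\item a pure power (exponent $\neq0,1,-3$) or an exponential gives $2$;
\item otherwise the dimension is $1$.
\end{itemize}
These correspond exactly to $B_y=0$, to \eqref{eq:cubic_lc_3_3DLA} with its inequality, and to \eqref{eq:cubic_lc_3_2DLA}. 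The singular solution excluded by the inequality is $G\propto(y+b)^{4/7}$.

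That would be a genuinely more elementary proof than the paper's. Your proposal does not take it, however, and the Bagderina route as you set it up does not reach the stated conditions.
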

\begin{proof}
In order to obtain the proof of this statement we use Theorems 1 and 4 from \cite{Bagderina2016}. Notice that in \cite{Bagderina2016} instead of the Liouville invariants the notations $\beta_{1}=-\mathcal{L}_{1}$ and $\beta_{2}=-\mathcal{L}_{2}$ are used. We will stick to the classical notation and use expressions \eqref{eq:Liouville_inv_co_AB} for the Liouville invariants. In what follows we also express invariants and other quantities for \eqref{eq:cubic_eq} in terms of the functions $A$ and $B$ (see definitions in \eqref{eq:cubic_notation}). All other notation is inherited from \cite{Bagderina2016}. Since \eqref{eq:cubic_eq} does not depend explicitly on $x$, it is clear that $\dim{\mathfrak{p}(c)}\in \{1,2,3,8\}$. Below, we demonstrate that all four cases are possible.

We use the following criteria for different dimensions of $\mathfrak{p}(c)$. According to  \cite{Bagderina2016} (see Theorems 1 and 4 there) we have that: 1) $\dim{\mathfrak{p}(c)}=2$ if and only if all algebraic basis invariants of \eqref{eq:cubic_eq} are constant; 2) $\dim{\mathfrak{p}(c)}=3$ if and only if $J_{0}=j_{0}=j_{1}=j_{2}=\Gamma_{0}j_{3}+5\mathcal{L}_{1}=0$ and $\mathcal{L}_{1}\neq0$. It is well known that $\dim{\mathfrak{p}(c)}=8$ if and only if $\mathcal{L}_{1}=\mathcal{L}_{2}=0$.

Suppose that \eqref{eq:cubic_lc_3} holds. Then using Theorem 1 of \cite{Bagderina2016} we compute the quantities $J_{0}$, $j_{k}$, $k=0,1,2,3$, $\Gamma_{0}$, $\Lambda$ and $e_{0}$ for \eqref{eq:cubic_eq}. We also substitute condition \eqref{eq:cubic_lc_3} into the definition of the Liouville invariants \eqref{eq:Liouville_inv_co_AB} to obtain expressions for $\mathcal{L}_{1}$  and $\mathcal{L}_{2}$. As a result, we find that the relative invariant $J_{0}$ and the functions $j_{0}$, $j_{2}$ and $e_{0}$ vanish. The rest of the above quantities are given by
\begin{align}\label{eq:thc1_eq1}
  j_{1}  =&-\frac{1}{486g^{2}}\bigg(45g^{2} B_{y}B_{y yy}-54 g^{2} B_{yy}^{2} -9 g B_{y} \left(B -3 g_{y}\right) B_{yy}-15 B_{y}^{3} g + \nonumber \\
         &\left(4 B^{2}-9 B g_{y}+45 g g_{yy}-9  g_{y}^{2}\right) B_{y}^{2} \bigg),\nonumber \\
  j_{3} =& -\frac{9 \left(4 B B_{y}+3 g B_{y y}+3 B_{y} g_{y}\right)}{5 B_{y}^{3}}, \quad
  \Gamma_{0}= \frac{B_{y} \left(3 g B_{y y}+3 B_{y} g_{y}-B B_{y}\right)}{27 g},\\
  \Lambda=&\frac{9}{25B_{y}^{6}}\bigg(135 B_{y} B_{yyy} g^{2}+252 B_{yy}^{2} g^{2}+3 B_{y} g \left(374 B +303 g_{y}\right) B_{y ,y}+ \nonumber \\
          &B_{y}^{2} \left(808 B^{2}+1122 B g_{y}+30 g B_{y}+135 g g_{y ,y}+387 g_{y}^{2}\right)\bigg), \nonumber \\
 \mathcal{L}_{1}=& \frac{B_{y}}{3}, \quad \mathcal{L}_{2}=\frac{fB_{y}}{9g}. \nonumber
\end{align}

We see that $\mathcal{L}_{1} j_{1}\neq0$ for arbitrary functions $f$, $g$, $h$ and $k$ satisfying \eqref{eq:cubic_lc_3}. Therefore, a generic equation from \eqref{eq:cubic_eq} satisfying \eqref{eq:cubic_lc_3} falls into the fourth class of equations listed in \cite{Bagderina2016}. Then, taking into account that $j_{2}=0$, we compute basis algebraic invariants
\begin{equation}\label{eq:thc1_eq2}
 I_{1}=-\frac{\Gamma_{0}}{\mathcal{L}_{1}j_{1}^{1/2}}, \quad I_{2}=10 j_{1}^{1/2}j_{3}, \quad \mathcal{L}_{1} j_{1}\neq0.
\end{equation}
As a result, we find that $I_{1}$ and $I_{2}$ are not constant for arbitrary choice of functions $f$, $g$, $h$ and $k$ satisfying \eqref{eq:cubic_lc_3}. Therefore, in the generic case, the Lie algebra of point symmetries of \eqref{eq:cubic_eq} satisfying \eqref{eq:cubic_lc_3} is one dimensional. Let us also remark that when $j_{1}=0$ and $j_{3}\mathcal{L}_{1}(\Gamma_{0}j_{3}-5\beta_{1})\neq0$, an equation from \eqref{eq:cubic_eq} whose coefficients fulfill \eqref{eq:cubic_lc_3} falls into the sixth class of equations listed in \cite{Bagderina2016}. The corresponding basis relative invariants are
\begin{equation}\label{eq:thc1_eq3}
 I_{1}=-\frac{\Gamma_{0}j_{3}}{\mathcal{L}_{1}}, \quad I_{2}=\frac{(5-I_{1})\Lambda}{5j_{3}^{2}}, \quad \mathcal{L}_{1} j_{3}\left(\Gamma_{0}j_{3}+5\mathcal{L}_{1}\right)\neq0,
\end{equation}
where we use the fact that $e_{0}=0$. The invariants given in \eqref{eq:thc1_eq3} are also non-constant for an arbitrary choice of unctions $f$, $g$, $h$ and $k$ satisfying \eqref{eq:cubic_lc_3}. Consequently, we see again that generically $\dim{\mathfrak{p}(c)}=1$.

Let us classify equations from \eqref{eq:cubic_eq} that satisfy \eqref{eq:cubic_lc_3} and have that $\dim{\mathfrak{p}(c)}=2$. First, we suppose that $\mathcal{L}_{1}j_{1}\neq0$ and require that $I_{1,y}=I_{2,y}=0$ for $I_{1}$, $I_{2}$ provided in \eqref{eq:thc1_eq2}. As a result, we obtain the following system of equations
\begin{equation}
\begin{gathered}\label{eq:thc1_eq4}
 \mathcal{L}_{1}j_{1}\Gamma_{0,y}-\Gamma_{0}j_{1}\mathcal{L}_{1,y}-\frac{1}{2}\Gamma_{0}\mathcal{L}_{1}j_{1,y}=0,\\
 j_{3}j_{1,y}+2j_{1}j_{3,y}=0.
\end{gathered}
\end{equation}
Assume that $j_{3}\neq0$. Then, excluding $j_{1}$ from the first equation of \eqref{eq:thc1_eq4} via the second one we get
\begin{equation}\label{eq:thc1_eq5}
j_{3}\mathcal{L}_{1}\Gamma_{0,y}-\Gamma_{0}j_{3}\mathcal{L}_{1,y}+\Gamma_{0}\mathcal{L}_{1}j_{3,y}=0.
\end{equation}
If we substitute the expressions for $j_{3}$, $\mathcal{L}_{1}$ and $\Gamma_{0}$ into \eqref{eq:thc1_eq5}, we obtain a third-order differential equation with respect to $B$. The coefficient in front of $B_{yyy}$ has the form
\begin{equation}\label{eq:thc1_eq6}
C_{3}=2gB_{yy} + BB_{y} + 2g_{y}B_{y}.
\end{equation}
We suppose that $C_{3}\neq0$ and substitute the expression for $B_{yyy}$ obtained from \eqref{eq:thc1_eq5} into the second equation of \eqref{eq:thc1_eq4}. As a consequence, we get
\begin{equation}\label{eq:thc1_eq7}
\frac{625B_{y}^{5}g^{2}\Gamma_{0}^2j_{3}(j_{3}^{2}B_{y}^3 + 9j_{3}BB_{y} - 81g)(gBB_{yy} +g_{y}BB_{y}- 2gB_{y}^{2})}{9C_{3}^{3}}=0.
\end{equation}
First, we suppose that $gBB_{yy} +g_{y}BB_{y}- 2gB_{y}^{2}=0$, which meant that the condition from \eqref{eq:cubic_lc_3_2DLA} holds. One can verify that under this conditions both equations in \eqref{eq:thc1_eq4} vanish. Moreover, integrating \eqref{eq:cubic_lc_3_2DLA} once we get
\begin{equation}\label{eq:thc1_eq8}
gB_{y}-c_{1}B^{2}=0,
\end{equation}
where $c_{1}$ is an integration constant.

Using both \eqref{eq:cubic_lc_3_2DLA} and \eqref{eq:thc1_eq8} one can demonstrate that $I_{1}$ and $I_{2}$ are constants. However, the function $j_{1}$ vanishes when $c_{1}=1/6$ or $c_{1}=4/9$, which is the same as $(9gB_{y}-4B^{2})(6gB_{y}-B^{2})=0$. These cases will be taken into account below when we consider the case of $j_{1}=0$. Moreover, from the condition $\Gamma_{0}=0$ it follows that $j_{1}=0$ and, hence, with this case we also deal below. We are only left with the possibility of $j_{3}^{2}B_{y}^3 + 9j_{3}BB_{y} - 81g=0$, which leads to $j_{1}=0$ as well. If we assume that $C_{3}=0$ we obtain that both equations in \eqref{eq:thc1_eq4} vanish if and only if $B^{2}+4gB_{y}=0$, which is a particular case of \eqref{eq:thc1_eq8}. Therefore, the case of $C_{3}=0$ is included into \eqref{eq:cubic_lc_3_2DLA}. 

Finally, one can show that $j_{3}$ and the first equation from \eqref{eq:thc1_eq4} vanish simultaneously if and only if $2B^{2}+3gB_{y}=0$, which is again a particular case of  \eqref{eq:cubic_lc_3_2DLA}. As a consequence, we obtain that all equations from \eqref{eq:cubic_osc}, whose coefficients satisfy \eqref{eq:cubic_lc_3}, have $\dim{\mathfrak{p}(c)}=2$ in the case of $j_{1}\neq0$ if and only if \eqref{eq:cubic_lc_3_2DLA} is fulfilled provided that $(9gB_{y}-4B^{2})(6gB_{y}-B^{2})\neq0$. Below we show that the condition $6gB_{y}-B^{2}\neq0$ can be removed since it corresponds to the equations from \eqref{eq:cubic_osc} under condition \eqref{eq:cubic_lc_3}, which falls into the sixth  class of equations listed in \cite{Bagderina2016} and having two-dimensional Lie algebra of point symmetries. If $9gB_{y}-4B^{2}=0$ we obtain that $j_{1}=\Gamma_{0}j_{3}+5\mathcal{L}_{1}=0$ and the corresponding equations from \eqref{eq:cubic_eq} have $\dim{\mathfrak{p}(c)}=3$ and this case is considered below separately.

In order to finish with the equations from \eqref{eq:cubic_eq} having two-dimensional Lie algebra we consider the case of $j_{1}=0$ assuming that $\mathcal{L}_{1}j_{3}(\Gamma_{0}j_{3}+5\mathcal{L}_{1})\neq0$. Otherwise, we will have that $\dim{\mathfrak{p}(c)}$ is $3$ or $8$. In the same way as in the case of $j_{1}\neq0$ we differentiate basis invariants given in \eqref{eq:thc1_eq3} and require that these derivative are zeros. The corresponding computations are similar to those of the case of $j_{1}\neq0$ and, thus, are omitted. The only condition that fulfills that $\mathcal{L}_{1}j_{3}(\Gamma_{0}j_{3}+5\mathcal{L}_{1})\neq0$ and $I_{1,y}=I_{2,y}=0$ is $6gB_{y}-B^{2}=0$. This is a particular case of \eqref{eq:thc1_eq8} and, hence, of \eqref{eq:cubic_lc_3_2DLA}. Therefore, as a criterion for \eqref{eq:cubic_eq} under condition \eqref{eq:cubic_lc_3} to have $\dim{\mathfrak{p}(c)}=2$ we can use \eqref{eq:cubic_lc_3_2DLA}, taking into account that if $6gB_{y}-B^{2}\neq0$ or $6gB_{y}-B^{2}=0$ the basis invariants are given by \eqref{eq:thc1_eq2} or  \eqref{eq:thc1_eq3}, respectively.


Now we consider the case of a three-dimensional Lie algebra of point symmetries. Taking into account that $j_{0}=j_{2}=0$ for \eqref{eq:cubic_eq} under \eqref{eq:cubic_lc_3}, we have that according to Theorem 1 of \cite{Bagderina2016} $\dim \mathfrak{p}(c)=3$ if and only if $j_{1}=\Gamma_{0}j_{3}+5\mathcal{L}_{1}=0$, provided that $\mathcal{L}_{1}\neq0$.  Condition $\Gamma_{0}j_{3}-5\beta_{1}=0$ is equivalent \eqref{eq:cubic_lc_3_3DLA}. Moreover, if we differentiate \eqref{eq:cubic_lc_3_3DLA} once, we get
\begin{multline}\label{eq:thc1_eq9}
9g^{2}B_{y}C_{3} B_{yyy}-18 g^{3} B_{y ,y}^{3}-
 g \left(16 g B_{y}-9 B^{2}-27 B g_{y}-18 g g_{y y}-18 g_{y}^{2}\right) B_{y}^{2} B_{yy}-\\
 \left(4 B^{3}-9 B^{2} g_{y}+33 B g B_{y}-9 B g g_{yy}-9 B g_{y}^{2}+16 g B_{y} g_{y}-18 g g_{y} g_{yy}\right)B_{y}^{3}=0,
\end{multline}
where $C_{3}$ is given by \eqref{eq:thc1_eq6}. If $C_{3}=0$, then $\Gamma_{0}j_{3}+5\mathcal{L}_{1}$ and $j_{1}$ vanish simultaneously if and only if $B=0$, which is a particular case of condition \eqref{eq:cubic_lc_3_8DLA} for $\dim \mathfrak{p}(c)=8$. Thus, we assume that $C_{3}\neq0$ and exclude $B_{yyy}$ from $j_{1}$ via \eqref{eq:thc1_eq9} and then use \eqref{eq:cubic_lc_3_3DLA} to exclude $B_{yy}^{2}$. As a result, we obtain that $j_{1}=0$. Therefore, $\Gamma_{0}j_{3}+5\mathcal{L}_{1}=0$ yields the condition $j_{1}=0$ if $C_{3}\neq0$. As a consequence, we have that equations from family \eqref{eq:cubic_eq} that satisfy \eqref{eq:cubic_lc_3} have a three-dimensional projective Lie algebra if and only if condition \eqref{eq:cubic_lc_3_3DLA} is fulfilled. Moreover, it is known (see, e.g. \cite{Kruglikov2009,Bagderina2016}) that this three dimensional Lie algebra is $\mathfrak{sl}(2,\mathbb{R})$.

Finally, an equation from \eqref{eq:cubic_eq} can have an eight-dimensional projective Lie algebra if and only if both Liouville invariants vanish. Using the expressions for $\mathcal{L}_{1}$ and $\mathcal{L}_{2}$ given in \eqref{eq:thc1_eq1} we arrive at the condition \eqref{eq:cubic_lc_3_8DLA}. It is known that this eight-dimensional Lie algebra of point symmetries is isomorphic to $\mathfrak{sl}(3,\mathbb{R})$. This completes the proof.
\end{proof}

\textbf{Example 1}. We provide an example of an equation from \eqref{eq:cubic_osc} that illustrates that all dimensions of $\mathfrak{p}(c)$ listed in Theorem \ref{th:thc1} are possible. Consider the oscillator
\begin{equation}\label{eq:ex1_1}
  y_{xx}+\left(\frac{1}{y}+\frac{\mu-1}{y^{2}}+\frac{\nu}{3y^{3}}\right)y_{x}^{3}+\left(3+\frac{\mu-2}{y}+\frac{\nu}{3y^{2}}\right)y_{x}^{2}+3y y_{x}+y^{2}=0,
\end{equation}
First, it is not difficult to check that coefficients of \eqref{eq:ex1_1} satisfy \eqref{eq:cubic_lc_3}. Then, using \eqref{eq:cubic_notation} we find that $B=3\mu y+\nu$. Substituting $B$ and $g$ into \eqref{eq:cubic_lc_3_2DLA} we obtain that it vanishes if and only if $\mu\nu=0$. Consequently, for arbitrary non-zero values of the parameters  $\mu$ and $\nu$ we have that $\dim{\mathfrak{p(c)}}|_{\eqref{eq:ex1_1}}=1$. It is also clear that $B_{y}=3\mu$ and, hence, if $\mu=0$, then $\dim{\mathfrak{p(c)}}|_{\eqref{eq:ex1_1}}=8$. Thus, further, we assume that $\mu\neq0$ and $\nu=0$. Substituting the expressions for $B$ and $g$ into \eqref{eq:cubic_lc_3_2DLA} we obtain that $\dim{\mathfrak{p(c)}}|_{\eqref{eq:ex1_1}}=2$ if $\mu\neq3/4$. Using relations \eqref{eq:cubic_lc_3_3DLA} we obtain that $\dim{\mathfrak{p(c)}}|_{\eqref{eq:ex1_1}}=3$ if and only if $\mu=3/4$.

\section{Metrisability of cubic oscillators}
\label{sec:metr}

In this section we classify all metrisable cubic oscillators that satisfy condition \eqref{eq:cubic_lc_3}. We also demonstrate that all metrics described in Theorem 3 of \cite{Agapov2024} are either Darboux-superintegrable or flat. We show how to obtain non-trivial superintegrable metrics for metrisable equations from \eqref{eq:cubic_eq} under correlation  \eqref{eq:cubic_lc_3}.

\begin{theorem}
\label{th:thc2}
Suppose that correlation \eqref{eq:cubic_lc_3} holds and $g\neq0$. Then equation \eqref{eq:cubic_osc} is metrisable if and only if one of the following conditions holds:
\begin{enumerate}
  \item[1)] $\psi_{1,x}=0$;
  \item[2)] correlation \eqref{eq:cubic_lc_3_3DLA} is fulfilled or $\dim{\mathfrak{p}(\mathfrak{g})}=3$;
  \item[3)] correlation \eqref{eq:cubic_lc_3_8DLA} is fulfilled or $\dim{\mathfrak{p}(\mathfrak{g})}=8$.
\end{enumerate}
\end{theorem}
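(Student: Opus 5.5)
We will work directly with the Liouville system \eqref{eq:Liouville}, with coefficients read off from \eqref{eq:cubic_osc}: $a_{3}=k(y)$, $a_{2}=h(y)$, $a_{1}=f(y)$, $a_{0}=g(y)$. All of them depend only on $y$, and by \eqref{eq:cubic_lc_3} $k$ can be eliminated in favour of $h,f,g$ via $A=0$. The first equation, $\psi_{1,x}=-\tfrac{2}{3}f\psi_{1}+2g\psi_{2}$, can be solved for $\psi_{2}$ since $g\neq0$. Substituting into the remaining three equations leaves an overdetermined linear system for $\psi_{1},\psi_{3}$ with $y$-dependent coefficients. The $x$-derivatives then give a natural splitting: either $\psi_{1,x}=0$, or $\psi_{1}$ genuinely depends on $x$.

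\textbf{Case $\psi_{1,x}=0$.} Then $\psi_{2}=f\psi_{1}/(3g)$ is a function of $y$ alone. The third equation of \eqref{eq:Liouville} determines $\psi_{3}$ algebraically in terms of $\psi_{1},\psi_{1,y}$. The remaining two equations become ODEs in $y$ for $\psi_{1}$, possibly with $\psi_{3}$ acquiring a term linear in $x$. We would check, using $A=0$ and the notation $B$, $C$ from \eqref{eq:C_def}, that these ODEs are compatible and admit solutions with $\Delta\neq0$. This is the ``if'' direction for condition 1). Presumably the solution is expressible through $C_{y}$ and the first integral \eqref{eq:cubic_FI_3}, whose lift gives the quadratic integral.

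\textbf{Case $\psi_{1,x}\neq0$.} Since the coefficients are $x$-independent and the system is linear, the solution space is finite-dimensional, at most $6$. It is also invariant under $\partial_{x}$, so we may decompose into generalized eigenfunctions in $x$. We would differentiate the compatibility conditions repeatedly in $x$ and eliminate $\psi_{2},\psi_{3}$. This yields a linear constraint on $\psi_{1}$ whose coefficients are differential expressions in $B$ and $g$. Nonvanishing of $\psi_{1,x}$ then forces these coefficients to vanish, which should give one of two alternatives. The first is \eqref{eq:cubic_lc_3_8DLA}, namely $B_{y}=0$: here $\mathcal{L}_{1}=\mathcal{L}_{2}=0$ by \eqref{eq:thc1_eq1}, the equation is point-linearizable, and it is metrisable by a flat metric. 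The second is \eqref{eq:cubic_lc_3_3DLA}. In that case Theorem~\ref{th:thc1} gives $\mathfrak{p}(c)\cong\mathfrak{sl}(2,\mathbb{R})$, and we would invoke the known fact \cite{Matveev2008,Matveev2012} that $\dim\mathfrak{p}(\mathfrak{g})=3$ forces metrisability with a Darboux-superintegrable metric, or show it directly by solving the system.

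\textbf{Converse.} If a metric exists with $\psi_{1,x}\neq0$, the other solutions obtained by $x$-translation, namely $\psi(x+c,y)$ and $\psi_{x}$, are also solutions. This yields a degree of mobility of at least $2$, in fact at least $3$, because the $x$-translations generate new independent solutions. By Theorem~\ref{th:thc1} the possible projective algebras are of dimension $1,2,3,8$, and a large degree of mobility forces $\dim\mathfrak{p}(\mathfrak{g})\in\{3,8\}$. This closes the trichotomy.

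\textbf{Main obstacle.} The hardest step is the elimination in the case $\psi_{1,x}\neq0$. There one must show that the resulting integrability conditions reduce precisely to \eqref{eq:cubic_lc_3_3DLA} or $B_{y}=0$, with no spurious branches. I would first try to separate variables by seeking solutions of the form $\psi_{m}=e^{\lambda x}\phi_{m}(y)$ or polynomial in $x$. This reduces the problem to an ODE eigenvalue problem in $y$, whose solvability for $\lambda\neq0$ should reproduce condition \eqref{eq:cubic_lc_3_3DLA}.
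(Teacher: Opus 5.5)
\textbf{Gap: the necessity direction is asserted, not proved.} Your branch $\psi_{1,x}=0$ is fine. In fact, with $\psi_{1,x}=0$ the fourth Liouville equation collapses exactly to $A\psi_1=0$, so $\psi_3$ picks up no $x$-dependence and only the ODE \eqref{eq:thc2_eq4} remains. Sufficiency in cases 2) and 3) can also be taken from known facts, though not as you phrase it: $\dim\mathfrak{p}=3$ does not ``force metrisability''. Rather, the $x$-invariant metric from case 1) has $\mathfrak{p}(\mathfrak{g})=\mathfrak{p}(c)\cong\mathfrak{sl}(2,\mathbb{R})$. It is therefore Darboux-superintegrable with degree of mobility $4$, which exceeds the two-dimensional space of $x$-invariant solutions, so solutions with $\psi_{1,x}\neq0$ and $\Delta\neq0$ exist.

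The gap is the substantive necessity direction, the one used in Corollary \ref{cr:cr3}: a solution with $\psi_{1,x}\neq0$ forces \eqref{eq:cubic_lc_3_3DLA} or $B_y=0$. You only announce that an elimination ``should give'' these alternatives, but that elimination is the whole proof. The paper prolongs \eqref{eq:thc2_eq1a}--\eqref{eq:thc2_eq1b} until all third derivatives of $\psi_1$ are expressed through lower ones. The single compatibility condition is then \eqref{eq:thc2_eq2}, which involves $\psi_{1,xy}$ and $\psi_{1,xx}$, so your ``nonvanishing of $\psi_{1,x}$ forces the coefficients to vanish'' does not apply to it. Only after splitting on $B_y$ and prolonging once more does one reach \eqref{eq:thc2_eq3}, which is $\psi_{1,x}$ times a multiple of $j_1$. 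The $x$-derivative of \eqref{eq:thc2_eq2} then gives \eqref{eq:cubic_lc_3_3DLA}. Finally one checks that \eqref{eq:cubic_lc_3_3DLA} implies $j_1=0$ when $C_3\neq0$, and that $C_3=0$ forces $B_y=0$.

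\textbf{Your ``Converse'' does not fill this gap.} First, $x$-translates need not be new solutions: for $\psi=e^{\lambda x}\phi(y)$ one has $\psi_x=\lambda\psi$. Degree of mobility $\geq 3$ follows only after adding the two-dimensional family of $x$-invariant solutions. Second, and more seriously, ``large degree of mobility forces $\dim\mathfrak{p}(\mathfrak{g})\in\{3,8\}$'' is not a quotable fact. Theorem \ref{th:thc1} only says which dimensions of $\mathfrak{p}(c)$ occur. In general there are surface metrics with exactly three independent quadratic integrals, and these are neither flat nor Darboux-superintegrable.

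For the present class the implication is circular. The condition $A=0$ is exactly the condition for \eqref{eq:cubic_osc} to be metrisable by an $x$-invariant metric, i.e.\ by a metric with Killing field $\partial_x$. For such metrics, ``degree of mobility $>2$ $\Rightarrow$ $\dim\mathfrak{p}(\mathfrak{g})\in\{3,8\}$'' is precisely what the theorem asserts.

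\textbf{How to close it.} You must do the computation, either via the paper's prolongation or by making your separation idea precise. Since $\partial_x$ acts on the finite-dimensional solution space, a solution with $\psi_{1,x}\neq0$ yields one of two things:
\begin{itemize}
\item an eigen-solution $e^{\lambda x}\phi(y)$ with $\lambda\neq0$ (possibly complex), or
\item a Jordan-block solution $x\phi_0(y)+\phi_1(y)$ with $\phi_0\neq0$ an $x$-invariant solution.
\end{itemize}
In each case you would have to show that solvability of the resulting ODE system in $y$ is equivalent to \eqref{eq:cubic_lc_3_3DLA} or $B_y=0$.
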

\begin{proof}
In order to proof this statement we find compatibility conditions for Lioiville system \eqref{eq:Liouville} for \eqref{eq:cubic_osc}. This is done by excluding $\psi_{2}$ and $\psi_{3}$ from \eqref{eq:Liouville} via
\begin{equation}\label{eq:psi_2_3_def}
  \begin{gathered}
 \psi_{2}=\frac{3\psi_{1,x}+2f\psi_{1}}{6g}, \quad \psi_{3}=\frac{9\psi_{1,xx}+3f\psi_{1,x}+9g\psi_{1,y}-2f^{2}\psi_{1}+12hg\psi_{1}}{18g^{2}},
  \end{gathered}
\end{equation}
and obtaining a system of two linear partial differential equations for $\psi_{1}$, which has the form
\begin{multline}\label{eq:Liouville_3_1}
27 \psi_{1,xxx}+81g \psi_{1,xy} -27 f \psi_{1,xx}-18\left(f^{2}-3 gh -3 g_{y}\right) \psi_{1,x}+\\
\left(8 f^{3}-36 f g h +108 g^{2} k -36 f g_{y}+36 g f_{y}\right) \psi_{1}=0,
\end{multline}
\begin{multline}\label{eq:Liouville_3_2}
27g \psi_{1,xxy} +9 f \psi_{1,xy} g +27 g^{2} \psi_{1,yy}-18\left(g h +3 g_{y}\right) \psi_{1,xx}-\\
3\left(2f g h -18 g^{2} k + 6 f g_{y}-3 g f_{y}\right) \psi_{1,x}-3 \left(2 f^{2}-6 g h +9 g_{y}\right) g \psi_{1,y}+ \\
4\left(f^{2} g h +9 f \,g^{2} k -6 g^{2} h^{2}+3 f^{2} g_{y}-3f g f_{y}+9 g^{2} h_{y}-9 g h g_{y}\right) \psi_{1}=0
\end{multline}
Below we obtain integrability conditions for \eqref{eq:Liouville_3_1}, \eqref{eq:Liouville_3_2} as an overdetermined system of two equations for the function $\psi_{1}$.

First, we exclude the function $k$ from  \eqref{eq:Liouville_3_1}-\eqref{eq:Liouville_3_2} via \eqref{eq:cubic_lc_3}. Then, in order to simplify the computations, we use the definition of $B$ given in \eqref{eq:cubic_notation} to exclude the function $h$ from \eqref{eq:Liouville_3_1}-\eqref{eq:Liouville_3_2}. This yields
\begin{equation}\label{eq:thc2_eq1a}
3 \psi_{1,xxx}+9 g \psi_{1,xy}  -3 f  \psi_{1,xx}-2\left(6 g_{y}+B \right)  \psi_{1,x}=0,\\
\end{equation}
\begin{multline}\label{eq:thc2_eq1b}
81 g \psi_{1,xxy}  +81 g^{2}\psi_{1,yy}+27 f g \psi_{1,xy}  -18 \left(f^{2}-B +6 g_{y}\right) \psi_{1,xx}-9 g \left(15 g_{y}+2 B \right) \psi_{1,y}-\\
3 \left(4 B f +12 f g_{y}+9 g f_{y}\right) \psi_{1,x}-4 \left(2 B^{2}-6 B g_{y}+9 g B_{y}+27 g g_{yy}-36 g_{y}^{2}\right) \psi_{1}=0.
\end{multline}

To obtain integrability conditions of \eqref{eq:thc2_eq1a}-\eqref{eq:thc2_eq1b} we follow the classical approach by Liouville \cite{Dunajski2009}. Cross-differentiating equations \eqref{eq:thc2_eq1a}-\eqref{eq:thc2_eq1b}, we find an expression for $\psi_{1,xyy}$. Then, differentiating $\psi_{1,xyy}$ by $x$ and equation \eqref{eq:thc2_eq1a} by $y$ we obtain the expression for $\psi_{1,yyy}$. Consequently, we have expressions for all third order partial derivatives through the derivatives of lower order. Finally, we compute the difference between $\psi_{1,xyyy}$ and $\psi_{1,yyyx}$ and using all third order derivatives we obtain the correlation
\begin{equation}\label{eq:thc2_eq2}
45gB_{y}\psi_{1,xy}-15fB_{y}\psi_{1,xx}+(9gB_{yy}-2BB_{y}-51g_{y}B_{y})\psi_{1,x}=0.
\end{equation}
Since  $g\neq0$ further computations split into two cases: $B_{y}\neq0$ and $B_{y}=0$.

Suppose that $B_{y}\neq0$. Then we solve \eqref{eq:thc2_eq2} for $\psi_{1,xy}$ and differentiate the former by $x$. As a result, we find an expression for $\psi_{1,yy}$.  If we differentiate \eqref{eq:thc2_eq2} by $y$ and eliminate $\psi_{1,xyy}$, $\psi_{1,xxy}$ and $\psi_{1,xy}$ we obtain
\begin{multline}\label{eq:thc2_eq3}
\psi_{1,x}\bigg[45 g^{2} B_{y} B_{yyy}-54 g^{2} B_{yy}^{2}-9 g B_{y} \left(B -3 g_{y}\right) B_{yy}+\\  \left(4 B^{2}-9 B g_{y}-15 g B_{y}+45 g g_{yy}-9 g_{y}^{2}\right) B_{y}^{2}\bigg]=0
\end{multline}
Thus, further computations split into two cases: $\psi_{1,x}=0$ and $\psi_{1,x}\neq0$.

Suppose that $\psi_{1,x}=0$. Then the equation \eqref{eq:thc2_eq1a} is fulfilled and from \eqref{eq:thc2_eq1b} we find that $\psi_{1}$ is a solution of a linear homogeneous equation
\begin{equation}\label{eq:thc2_eq4}
81g^{2} \psi_{1,yy} -9 g \left(15 g_{y}+2 B \right) \psi_{1,y}-4 \left(2 B^{2}-6 B g_{y}+9 g B_{y}+27 g g_{yy}-36 g_{y}^{2}\right) \psi_{1}=0.
\end{equation}
Notice that if $\psi_{1,x}=0$ from \eqref{eq:psi_2_3_def} it follows that $\psi_{2,x}=\psi_{3,x}=0$. Thus, we see that equations from \eqref{eq:cubic_osc} that satisfy \eqref{eq:cubic_lc_3} are metrisable without any additional conditions on the functions $f$, $g$, $h$ and $k$, if the corresponding metric do not depend on $x$.

Let us continue with the case of $\psi_{1,x}\neq0$. One more differentiation of \eqref{eq:thc2_eq2} by $x$ yields correlation \eqref{eq:cubic_lc_3_3DLA}. The first derivative of \eqref{eq:cubic_lc_3_3DLA} is presented in \eqref{eq:thc1_eq9}. If we assume that $C_{3}\neq0$ in \eqref{eq:thc1_eq9} and eliminate $B_{yyy}$ and $B_{yy}^{2}$ via \eqref{eq:thc1_eq9} and \eqref{eq:cubic_lc_3_3DLA}, respectively, we obtain that \eqref{eq:thc2_eq3} vanish. Further differentiating of \eqref{eq:thc2_eq2} does not lead to any new compatibility conditions.

If we assume that $C_{3}=0$ (see \eqref{eq:thc1_eq6} for the definition) then we obtain that \eqref{eq:cubic_lc_3_3DLA} and \eqref{eq:thc2_eq3} vanish simultaneously if and only if $B_{y}=0$ and this case is considered below.

Suppose that $B_{y}=0$. Then \eqref{eq:thc2_eq2} vanishes and we have the expressions of all third order derivatives of $\psi_{1}$ via lower order derivatives. Cross-differentiation of this third-order derivative does not result in any new compatibility conditions. Consequently, if $B_{y}=0$ an equation from \eqref{eq:cubic_osc} under conditions \eqref{eq:cubic_lc_3} is metrisable. This completes the proof.
\end{proof}

\begin{corollary}
\label{cr:cr3}
Metrics that correspond to \eqref{eq:cubic_eq} when condition \eqref{eq:cubic_lc_3} is fulfilled and with $\psi_{1,x}\neq0$, which are considered in Theorem 3 of \cite{Agapov2024}, are either Darboux-superintegrable or flat. 
\end{corollary}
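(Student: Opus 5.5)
The corollary should follow directly from the proof of Theorem \ref{th:thc2} combined with Theorem \ref{th:thc1}. The plan is to show that a metrisable equation from \eqref{eq:cubic_osc} obeying \eqref{eq:cubic_lc_3}, with a solution of the Liouville system having $\psi_{1,x}\neq0$, must satisfy either \eqref{eq:cubic_lc_3_3DLA} or \eqref{eq:cubic_lc_3_8DLA}. Theorem \ref{th:thc1} then gives $\dim\mathfrak{p}(c)=3$ with $\mathfrak{p}(c)\simeq\mathfrak{sl}(2,\mathbb{R})$, or $\dim\mathfrak{p}(c)=8$ with $\mathfrak{p}(c)\simeq\mathfrak{sl}(3,\mathbb{R})$. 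Since the connection is metrisable, $\mathfrak{p}(c)=\mathfrak{p}(\mathfrak{g})$. By the facts recalled from \cite{Matveev2008,Matveev2012}, $\dim\mathfrak{p}(\mathfrak{g})=3$ means the metric is Darboux-superintegrable, and $\dim\mathfrak{p}(\mathfrak{g})=8$ means it has constant curvature, i.e.\ it is flat in the sense used here.

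The key steps follow the case split in the proof of Theorem \ref{th:thc2}.

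First, I would check that the metrics of Theorem 3 in \cite{Agapov2024} are exactly (up to point transformations, which preserve both metrisability and $\mathfrak{p}$) projective connections of the form \eqref{eq:cubic_osc} satisfying \eqref{eq:cubic_lc_3}. Concretely, I would compute $A$ from \eqref{eq:cubic_notation} for the projections of those metrics via \eqref{eq:projection} and verify that $A=0$. I would also check that the metric components depend on $x$, so that by \eqref{eq:Liouville_1} and \eqref{eq:psi_2_3_def} we have $\psi_{1,x}\neq0$.

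Second, I would invoke the compatibility analysis of the Liouville system \eqref{eq:thc2_eq1a}--\eqref{eq:thc2_eq1b}. If $B_y=0$, we are in case \eqref{eq:cubic_lc_3_8DLA}. If $B_y\neq0$ and $\psi_{1,x}\neq0$, then differentiating \eqref{eq:thc2_eq2} in $x$ forces \eqref{eq:cubic_lc_3_3DLA}; the subcase $C_3=0$ again collapses to $B_y=0$.

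The main obstacle is the non-degeneracy condition $2gB_{yy}+BB_y+2g_yB_y\neq0$ that accompanies \eqref{eq:cubic_lc_3_3DLA} in Theorem \ref{th:thc1}. I would handle it through the same $C_3$ dichotomy: when $C_3=0$, the proof of Theorem \ref{th:thc2} shows that \eqref{eq:cubic_lc_3_3DLA} together with \eqref{eq:thc2_eq3} forces $B_y=0$, so we land in the flat case. When $C_3\neq0$, all hypotheses of the $\mathfrak{sl}(2,\mathbb{R})$ case hold. The only intermediate possibility to exclude is dimension $1$ or $2$ together with $\psi_{1,x}\neq0$, and that exclusion is precisely the content of the case analysis above.
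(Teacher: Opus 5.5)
Your proposal is correct and follows the paper's argument: with $\psi_{1,x}\neq0$, the compatibility analysis in the proof of Theorem~\ref{th:thc2} forces either \eqref{eq:cubic_lc_3_3DLA} (with $C_{3}\neq0$, so the $\mathfrak{sl}(2,\mathbb{R})$ case of Theorem~\ref{th:thc1}) or \eqref{eq:cubic_lc_3_8DLA}, and the recalled results of \cite{Matveev2008} then give Darboux-superintegrability or constant curvature. The only difference lies in identifying the structures of \cite{Agapov2024} with \eqref{eq:cubic_lc_3}. The paper eliminates the auxiliary functions $v,w$ from the integrability conditions (2.18) of \cite{Agapov2024}, which yields $A=0$ for the whole class at once and is cleaner than computing $A$ from each projected metric. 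The paper also simply takes $\psi_{1,x}\neq0$ as the assumption already made in \cite{Agapov2024}.
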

\begin{proof}
In \cite{Matveev2008} it was demonstrated that if $\dim{\mathfrak{p}(\mathfrak{g})}=3$ then $\mathfrak{g}$ is Darboux-superintegrable, i.e. $\mathfrak{g}$ has 4 linearly independent quadratic, with respect to momenta, first integrals. These metrics were classified by Koenigs \cite{Koenigs}. If $\dim{\mathfrak{p}(\mathfrak{g})}=8$, then the corresponding projective equation is point equivalent to $y_{xx}=0$ and, thus, $\mathfrak{g}$ has a constant curvature. Such metrics has three, linearly independent, linear in momenta first integrals.

Now we demonstrate that autonomous projective structures considered in \cite{Agapov2024} satisfy condition \eqref{eq:cubic_lc_3}. To this aim we use the notation of  \cite{Agapov2024}. The integrability conditions for the cubic oscillator
\begin{equation}\label{eq:cr2m2}
y_{xx}=A_{3}y_{x}^{3}+A_{2}y_{x}^{2}+A_{1}y_{x}+A_{0}.
\end{equation}
presented in formula (2.18) in \cite{Agapov2024} have the form
\begin{equation}\label{eq:cr2m1}
A_{1}=-\frac{3A_{0}}{v}, \quad A_{2}=\frac{3A_{0}}{v^{2}}+\frac{3v_{y}}{2v}-\frac{w_{y}}{2w}, \quad A_{3}=-\frac{A_{0}}{v^{3}}+\frac{1}{2v}\left(\frac{w_{y}}{w}-\frac{v_{y}}{v}\right).
\end{equation}
Here $v$ and $w$, $vw\neq0$ are axillary functions. In \cite{Agapov2024} the metrisability of \eqref{eq:cr2m2} when \eqref{eq:cr2m1} holds is studied. Notice also that in \cite{Agapov2024} it is assumed that $A_{0}A_{3}\neq0$. However, the requirement $A_{3}\neq0$ can be omitted.

Excluding the functions $v$ and $w$ from the first two equations of \eqref{eq:cr2m1} and substituting the results into the third one we obtain
\begin{equation}\label{eq:cr2m3}
A_{3}=\frac{9A_{2}A_{1}A_{0}-2A_{1}^{3}-9A_{0,y}A_{1}+9A_{1,y}A_{0}}{27A_{0}^{2}}.
\end{equation}
Since $A_{0}\neq0$  from \eqref{eq:cr2m3} we get
\begin{equation}\label{eq:cr2m4}
27A_{0}^{2}A_{3}-9A_{2}A_{1}A_{0}+2A_{1}^{3}+9A_{0,y}A_{1}-9A_{1,y}A_{0}=0.
\end{equation}
Now comparting \eqref{eq:cubic_osc} and \eqref{eq:cr2m2} we see that $A_{0}=-g$, $A_{1}=-f$, $A_{2}=-h$ and $A_{3}=-k$. Consequently, \eqref{eq:cr2m1}, \eqref{eq:cr2m3} and \eqref{eq:cr2m4} are equivalent to
\begin{equation}\label{eq:cr2m4a}
27kg^{2}-9hfg+2f^{3}+9gf_{y}-9fg_{y}=0.
\end{equation}
Taking into account definition \eqref{eq:cubic_notation} of the function $A$  we obtain that conditions \eqref{eq:cr2m1} are equivalent to linearizability condition \eqref{eq:cubic_lc_3}.

Consequently, we see that, in fact, in Theorem 3 of \cite{Agapov2024} the metrisability of \eqref{eq:cubic_osc} under condition \eqref{eq:cubic_lc_3} is studied.  Moreover, in \cite{Agapov2024} it is assumed that $\psi_{1,x}\neq0$. Therefore, due to the first part of this corollary, the corresponding metrics are either Darboux-superintegrable or flat. Furthermore, this means that all metrics presented in \cite{Agapov2024} have polynomial first integrals, and the rational first integrals provided in \cite{Agapov2024} are fractions of these polynomials.

Finally, let use remark that a particular case of condition \eqref{eq:cubic_lc_3} for $k=0$ and constant $h$ was considered in \cite{GL2004}. In general, correlation \eqref{eq:cubic_lc_3} can be considered as an integrability conditions for the family of the Abel differential equations $z_{y}=g z^{3}+fz^{2}+hz+k$, $y_{x}=z^{-1}$ (see, \cite{Kamke} 4.10(c)). This completes the proof.
\end{proof}

\begin{corollary}
All metrics that correspond to \eqref{eq:cubic_osc} under condition \eqref{eq:cubic_lc_3} are superintegrable. Their geodesics are formed by the curves
\begin{equation}\label{eq:explicit_geodesics}
  (x,x(y))=\left(x,c_{2}-\int\limits_{0}^{y}\left(\frac{f(\omega)}{3g(\omega)}-\sqrt{\frac{C^{'}(\omega)}{g(\omega)[c_{1}-2C(\xi)]}}\right)d\omega\right),
\end{equation}
where $c_{1}$ and $c_{2}$ are arbitrary constants.
\end{corollary}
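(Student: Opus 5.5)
The plan is to combine the two first integrals of \eqref{eq:cubic_osc} from the Corollary following Proposition~\ref{p:p2} with the metrisability result of Theorem~\ref{th:thc2}. Under \eqref{eq:cubic_lc_3} every metric $\mathfrak{g}$ in question has unparametrized geodesics that solve \eqref{eq:cubic_osc}. Condition \eqref{eq:cubic_lc_3} is exactly case 2c) of Proposition~\ref{p:p2} with \eqref{eq:alpha_def}, so that Corollary applies. The equation therefore has the autonomous rational quadratic integral $R$ of \eqref{eq:cubic_FI_3} and the non-autonomous integral $\mathcal{R}$ of \eqref{eq:cubic_Inv_3}.

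First, lift both integrals to $T^{*}M$ via $u=y_{x}=H_{p_{2}}/H_{p_{1}}$. Since $R$ depends only on $(y,u)$, $\tilde R=R(y,H_{p_2}/H_{p_1})$ is a function on $T^*M$ of degree zero in the momenta. To see that it is a first integral of \eqref{eq:geodesics_H}, use that along any geodesic with $\dot x\neq0$ the pair $(y(x),u(x))$ solves \eqref{eq:cubic_osc}, so $\frac{d}{dt}\tilde R=\dot x\,\frac{d}{dx}R=0$. The same argument applies to $\tilde{\mathcal{R}}=\mathcal{R}(x,y,H_{p_2}/H_{p_1})$, because $\mathcal{R}_x+\mathcal{X}\mathcal{R}=0$ is exactly the statement that $\mathcal{R}$ is constant along solutions.

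Next, check functional independence of $H,\tilde R,\tilde{\mathcal R}$. The function $H$ has degree two in $p$, while $\tilde R,\tilde{\mathcal R}$ have degree zero, so $H$ is independent of the other two. The integral $\tilde{\mathcal R}$ contains $x$ linearly with coefficient one, whereas $\tilde R$ does not involve $x$ through the projective variables. Hence $d\tilde R$ and $d\tilde{\mathcal R}$ are independent wherever $R_u=18gC_yu\cdot 3g/(fu+3g)^3\neq0$, which holds generically since $gC_y\neq0$. This gives superintegrability, since the three functions $H,\tilde R,\tilde{\mathcal R}$ are independent and each commutes with $H$. The statement speaks of "two first integrals in involution with $H$"; in dimension two any first integral Poisson-commutes with $H$, so this is automatic.

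Finally, obtain the geodesics from the level sets $R=c_1$, $\mathcal{R}=c_2$. Solving \eqref{eq:cubic_Inv_3} for $x$ yields \eqref{eq:explicit_geodesics}. Equivalently, solve \eqref{eq:cubic_FI_3} for $u$ to get $1/u=x_y=-\frac{f}{3g}+\sqrt{C_y/(g(c_1-2C))}$ up to the branch of the root, and integrate in $y$. This last step is a single quadrature.

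I expect the main obstacle to be the lifting step rather than the computations. One has to confirm that the non-autonomous integral is well defined on the relevant domain, where $g(c_1-2C)/C_y>0$ and $\dot x\neq 0$, and handle the sign of the square root consistently. One must also confirm that the dependence of $\tilde{\mathcal R}$ on $x$ does not spoil its conservation for metrics genuinely depending on $x$ (the case $\psi_{1,x}\neq0$). I would address this by noting that conservation only uses that the projected curves solve \eqref{eq:cubic_osc}, which holds for every metric satisfying \eqref{eq:projection} regardless of its $x$-dependence.
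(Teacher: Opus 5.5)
Your proof is correct, and it takes a genuinely different route from the paper.

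\textbf{How the paper argues.} The paper splits into cases via Theorem~\ref{th:thc2}.
\begin{enumerate}
\item[(i)] For $\psi_{1,x}=0$ the metric is $x$-independent. The paper uses the Noether integral $L=p_1$ together with the lift of the non-autonomous integral $\mathcal{R}$.
\item[(ii)] For $\psi_{1,x}\neq0$ it appeals to external results: Darboux superintegrability when $\dim\mathfrak{p}(\mathfrak{g})=3$, and constant curvature when $\dim\mathfrak{p}(\mathfrak{g})=8$.
\end{enumerate}

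\textbf{How you argue.} You give one uniform argument for every metric projecting to an equation with $A=0$. You lift both projective integrals $R$ and $\mathcal{R}$ and check independence directly. This removes any dependence on the classification in Theorem~\ref{th:thc2} and on the cited theorems. It also produces explicit integrals in the $x$-dependent cases. In exchange, the paper's route gives more structural information: it exhibits polynomial (linear or quadratic) integrals in each case. In the $x$-independent case your $\tilde R$ is just a function of $H/p_1^2$, so there your triple is equivalent to the paper's $(H,L,\tilde{\mathcal R})$. Your formula for $R_u$ is correct.

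\textbf{Points to make explicit.}
\begin{itemize}
\item When the metric depends on $x$, $\tilde R=R(y,H_{p_2}/H_{p_1})$ does depend on $x$ as a function on $T^*M$. So the phrase ``$\tilde R$ does not involve $x$'' only holds in the projective coordinates $(x,y,u)$.
\item Prove the independence there. Writing $\mathcal{R}=x+\Psi(y,u)$, you get
\[
dR\wedge d\mathcal{R}=R_y\,dy\wedge dx+R_u\,du\wedge dx+(R_y\Psi_u-R_u\Psi_y)\,dy\wedge du ,
\]
which is nonzero as soon as $dR\neq0$.
\item Transfer this to $T^*M$ by noting that $(x,y,p)\mapsto(x,y,H_{p_2}/H_{p_1})$ is a submersion. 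Indeed, $u$ is a M\"obius function of $p_2/p_1$ with determinant $\det(g^{ij})\neq0$.
\item Your Euler-field argument separating $H$ from the two degree-zero integrals requires $H\neq0$. This only matters in the pseudo-Riemannian case.
\item The remark about ``dimension two'' is unnecessary. Any first integral commutes with $H$ by definition.
\end{itemize}
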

\begin{proof}
If $\psi_{1,x}=0$, then from the relations \eqref{eq:psi_2_3_def} it follows that the metric $\mathfrak{g}$ and Hamiltonian $H$ in \eqref{eq:geodesics_H} do not explicitly depend on $x$. Therefore, there exists a linear first integral $L=p_{1}$.

On the other hand, if correlation \eqref{eq:cubic_lc_3} holds, then \eqref{eq:cubic_osc} admits invariant \eqref{eq:cubic_Inv_3}. If we lift this invariant via $y_{x}=u=H_{p_{2}}/H_{p_{1}}$, we obtain a first integral of Hamiltonian system \eqref{eq:geodesics_H} that explicitly depends on $x$. Clearly, this yields that $H$, $L$ and lifted \eqref{eq:cubic_Inv_3} are functionally independent (see also Proposition 1 in \cite{GS2024}).

If $\psi_{1,x}\neq0$ and both conditions \eqref{eq:cubic_lc_3} and \eqref{eq:cubic_lc_3_3DLA} are fulfilled, then $\mathfrak{g}$ is Darboux superintegrable.
In the case of the flat metrics that are described by \eqref{eq:cubic_lc_3_8DLA} it is known \cite{Matveev2011} that they are superintegrable with two linear with respect to momenta first integrals.

Lastly, suppose that $R=c_{1}$ and $\mathcal{R}=c_{2}$, where $c_{1}$ and $c_{2}$ are arbitrary constants. Then, from \eqref{eq:cubic_FI_3} and \eqref{eq:cubic_Inv_3} we find \eqref{eq:explicit_geodesics}. This completes the proof.
\end{proof}

\begin{corollary}
For any equation from \eqref{eq:cubic_osc} whose coefficients satisfy \eqref{eq:cubic_lc_3} there is the following solution of the Liouville system
\begin{multline}\label{eq:LS_solution}
\psi_{1}=(c_{3}+c_{4}C)\left(\frac{g}{C_{y}}\right)^{2/3}, \quad \psi_{2}=(c_{3}+c_{4}C)\frac{f}{3(gC_{y}^{2})^{1/3}}, \\
\psi_{3}=\frac{9c_{4}gC_{y}+f^{2}(c_{3}+c_{4}C)}{18(g^{2}C_{y})^{2/3}}, \quad gC_{y}\neq0,
\end{multline}
where $C$ is given in \eqref{eq:C_sol} and $c_{3}$ and $c_{4}\neq0$ are arbitrary constants.

Moreover, rational first integral \eqref{eq:cubic_FI_3} is a projection of a combination of the Hamiltonian that corresponds to \eqref{eq:LS_solution} and its linear first integral $L=p_{1}$.
\end{corollary}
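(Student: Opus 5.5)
The plan is to work in the $x$-independent branch $\psi_{1,x}=0$ of Theorem \ref{th:thc2}. There $\psi_{2}$ and $\psi_{3}$ follow from \eqref{eq:psi_2_3_def}, and $\psi_{1}$ only has to solve the linear second order ODE \eqref{eq:thc2_eq4}. I would show directly that $\psi_{1}=(g/C_{y})^{2/3}$ and $\psi_{1}=C(g/C_{y})^{2/3}$ both solve \eqref{eq:thc2_eq4}. Since the ODE is linear and second order, these two span its whole solution space, which gives the family $(c_{3}+c_{4}C)(g/C_{y})^{2/3}$.

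The verification is simplest in terms of $C$. Definition \eqref{eq:C_def} gives $B=\tfrac32(g_{y}+gC_{yy}/C_{y})$, so $h$ is eliminated and $B$ and $B_{y}$ are expressed through $g$ and $C$. Put $\psi_{1}=\phi\,(g/C_{y})^{2/3}$. I expect the choice of exponent $2/3$ to remove the first-derivative and zeroth-order terms in \eqref{eq:thc2_eq4}, up to a factor, leaving an equation of the form $(\phi_{y}/C_{y})_{y}=0$. That equation is solved exactly by $\phi=c_{3}+c_{4}C$. This substitution and cancellation is the main computational obstacle. If the reduced equation does not come out in that form, I would instead plug in $\phi=1$ and $\phi=C$ separately and simplify with the relation for $B$.

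With $\psi_{1}$ known and $\psi_{1,x}=0$, formula \eqref{eq:psi_2_3_def} gives $\psi_{2}=f\psi_{1}/(3g)$, which matches the stated expression. It also gives
\begin{equation*}
\psi_{3}=\frac{9g\psi_{1,y}+(12hg-2f^{2})\psi_{1}}{18g^{2}}.
\end{equation*}
Here I would rewrite $12hg$ using $B=3gh-f^{2}+3g_{y}$ and $B=\tfrac32(g_{y}+gC_{yy}/C_{y})$. After this, the derivatives of $(g/C_{y})^{2/3}$ should cancel against the $B$-terms, leaving $[9c_{4}gC_{y}+f^{2}(c_{3}+c_{4}C)]/(18(g^{2}C_{y})^{2/3})$. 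Then $\Delta=\psi_{1}\psi_{3}-\psi_{2}^{2}=c_{4}(g/C_{y})^{2/3}\,C_{y}/(2(g^{2}C_{y})^{1/3})\cdot(c_{3}+c_{4}C)/ (\cdots)$. The key point is that the $f^{2}$ terms cancel, so $\Delta$ is proportional to $c_{4}(c_{3}+c_{4}C)$ times a nonvanishing factor. This is why $c_{4}\neq0$ is required for $\Delta\neq0$ and hence for a genuine metric.

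For the last claim, I recover the metric from $g_{ij}=\psi_{ij}/\Delta^{2}$ and invert it, obtaining $H=\tfrac12 g^{ij}p_{i}p_{j}$ with $x$-independent coefficients, so $L=p_{1}$ is a first integral. I then project $c\,H/L^{2}$ using $u=H_{p_{2}}/H_{p_{1}}$, equivalently $p_{2}/p_{1}=(g_{12}+g_{22}u)/(g_{11}+g_{12}u)$. Because $g^{ij}$ is proportional to the $\psi$'s, this should produce a quadratic-over-quadratic expression in $u$ with denominator $(fu+3g)^{2}$. Comparing it with \eqref{eq:cubic_FI_3}, I expect $R=\mathrm{const}\cdot H/p_{1}^{2}+\mathrm{const}$, where the additive constant absorbs $c_{3}$ through the term $2C$.
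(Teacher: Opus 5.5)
Your route is the paper's route. You restrict to $\psi_{1,x}=0$, solve the remaining second-order linear ODE in $y$ through $C$, read off $\psi_{2},\psi_{3}$ from \eqref{eq:psi_2_3_def}, use $\Delta$ to force $c_{4}\neq0$, and project $H/p_{1}^{2}$. The last step is fine: the projection gives $\frac{c_{4}^{2}}{16}\bigl(\frac{9c_{4}gC_{y}u^{2}}{(fu+3g)^{2}}+2(c_{3}+c_{4}C)\bigr)$, so \eqref{eq:cubic_FI_3} equals $\frac{16}{c_{4}^{3}}H/p_{1}^{2}-\frac{2c_{3}}{c_{4}}$.

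The weak point is that you anchor the key verification to \eqref{eq:thc2_eq4} as printed. That equation is inconsistent with \eqref{eq:cubic_notation} and \eqref{eq:C_def}: its $\psi_{1,y}$ and $\psi_{1}$ coefficients carry $-B$ where $+B$ should stand. Take $f=k=0$, $g=1$, so that $A=0$, $B=3h$ and $(g/C_{y})^{2/3}=e^{-\frac{4}{3}\int h\,dy}$. Substituting this $\psi_{1}$ into \eqref{eq:thc2_eq4} leaves $(144h^{2}-216h_{y})\psi_{1}$. So neither your reduction nor your fallback of plugging in $\phi=1$ and $\phi=C$ would close.

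Work instead from the second Liouville equation $\psi_{3,y}=-2k\psi_{2}+\frac{2}{3}h\psi_{3}$; for $\psi_{1,x}=0$ the fourth equation reduces exactly to $A=0$. Eliminate $k$ via $A=0$ and use $2h=C_{yy}/C_{y}-g_{y}/g+2f^{2}/(3g)$. Then $\psi_{1}=\phi\,(g/C_{y})^{2/3}$ gives $\psi_{3}=(g/C_{y})^{2/3}\bigl(\frac{\phi_{y}}{2g}+\frac{f^{2}\phi}{9g^{2}}\bigr)$. The equation becomes $\frac{C_{y}}{2g}(g/C_{y})^{2/3}(\phi_{y}/C_{y})_{y}=0$, which is exactly the form you anticipated.

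Second, that expression for $\psi_{3}$ gives $2f^{2}(c_{3}+c_{4}C)$ in the numerator, not the $f^{2}(c_{3}+c_{4}C)$ of the statement, which is a misprint. You assert both the stated $\psi_{3}$ and the cancellation of the $f^{2}$ terms in $\Delta$, and these cannot both hold. With the stated $\psi_{3}$ one gets $\Delta=\frac{c_{3}+c_{4}C}{18g^{2}}\left(\frac{g}{C_{y}}\right)^{4/3}\bigl(9c_{4}gC_{y}-f^{2}(c_{3}+c_{4}C)\bigr)$, which does not single out $c_{4}$. With the correct coefficient the $f^{2}$ terms cancel and $\Delta=\frac{c_{4}}{2}(c_{3}+c_{4}C)(g/C_{y})^{1/3}$, as in the paper's proof. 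Your argument for $c_{4}\neq0$ then goes through, and the projection formula above is the one obtained from this corrected $\psi_{3}$.
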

\begin{proof}
Suppose that in \eqref{eq:thc2_eq1a}-\eqref{eq:thc2_eq1b} $\psi_{1,x}=0$. Then, \eqref{eq:thc2_eq1a} vanish identically and \eqref{eq:thc2_eq1b} is a linear homogeneous second-order equation for $\psi_{1}$. Excluding $B$ via \eqref{eq:C_def}, its general solution can be easily found and is presented in \eqref{eq:LS_solution}. Then, with the help of \eqref{eq:psi_2_3_def} we find the rest of the expressions from \eqref{eq:LS_solution}. One can find that for \eqref{eq:LS_solution} we have that $\Delta=(c_{3}+c_{4}C)(g/C_{y})^{1/3}(c_{4}/2)$. Since $g C_{y}\neq0$ we see that we need to require $c_{4}\neq0$ so that the corresponding metric $\mathfrak{g}$ do not degenerate.

With the help of the expressions \eqref{eq:LS_solution}, one can find the metric tensor $g_{ij}$, its inverse and the Hamiltonian for the geodesic flow
\begin{multline}\label{eq:cr4_eq1}
H=\frac{c_{4}C+c_{3}}{72gC_{y}} \bigg([(2Cf^2 + 9gC_{y})c_{4} + 2f^{2}c_{3}]p_{1}^2 - \\ 12gf (  c_{4}C +c_{3} )p_{1}p_{2} +18 g^{2}( c_{4}C + c_{3})p_{2}^{2}\bigg)
\end{multline}
Clearly, that this Hamiltonian also admits $L=p_{1}$ as a first integral. Then, we can project the expression $H/p_{1}^{2}$ on the $(x,y)$ plane via $y_{x}=u=H_{p_{2}}/H_{p_{1}}$. As a consequence, we get
\begin{equation}\label{eq:cr4_eq2}
R=\frac{9c_{4}^{3}gC_{y}u^{2}}{16(fu+3g)^{2}}+\frac{c_{4}^{3}}{8}C+\frac{c_{3}c_{4}^{2}}{8}.
\end{equation}
Without loos of generality, setting $c_{2}=2^{4/3}$, we obtain \eqref{eq:cubic_FI_3}. This completes the proof.
\end{proof}

Finally, we provide several examples of metrisable equations from \eqref{eq:cubic_osc} that correspond to superintegrable metrics.

\textbf{Example 2.}
Let us consider the following oscillator from \eqref{eq:cubic_eq} that satisfy condition \eqref{eq:cubic_lc_3_2DLA}
\begin{equation}\label{eq:ex2_1}
  y_{xx}+ (81y^6 + 27y^3 - 7)y_{x}^{3} + \frac{243y^6 + 54y^3 - 4}{3y}y_{x}^{2} + 3y(9y^3 + 1)y_{x} + 3y^3=0.
\end{equation}
One can also show that conditions \eqref{eq:cubic_lc_3_3DLA} and \eqref{eq:cubic_lc_3_8DLA} cannot be fulfilled for \eqref{eq:ex2_1}.

With the help of relations \eqref{eq:cubic_FI_3} and \eqref{eq:cubic_Inv_3} we find autonomous
\begin{equation}\label{eq:ex2_2}
 R=y^{-2/3}\left(1-\frac{9y^{2}u^{2}}{[3y(9y^{3}+1)u+9y^{3}]^{2}}\right),
\end{equation}
and non-autonomous
\begin{multline}\label{eq:ex2_3}
\mathcal{R}=x +\frac{3 y}{2 \left(9 u \,y^{3}+3 y^{2}+u \right)^{3}} \bigg(\left(729 y^{9}+81 y^{6}+27 y^{3}+7\right) y \,u^{3}+\\
\left(9 y^{3}+1\right) \left(81 y^{6}-9 y^{3}+2\right) u^{2}+27 y^{5} \left(9 y^{3}-1\right) u +3 y^{4} \left(9 y^{3}-2\right)\bigg).
\end{multline}
first integrals of \eqref{eq:ex2_1}, respectively.

Equation \eqref{eq:ex2_1} is metrisable and the corresponding Hamiltonian for geodesics is
\begin{equation}\label{eq:ex2_4}
H=y^{-4/3}\left(3(9y^{3}+2)y\frac{p_{1}^{2}}{2}-(9y^{3}-1)p_{1}p_{2}+3y^{2}\frac{p_{2}^{2}}{2}\right).
\end{equation}
This Hamiltonian system is superintegrable with 3 functionally independent integrals $H$, $L=p_{1}$ and the lift of \eqref{eq:ex2_3}, which is
\begin{multline}\label{eq:ex2_5}
T=\frac{1}{2}\left(2x -3 y^{2} \left(18 y^{3}+5\right) \left(18 y^{3}+1\right)\right)p_{1}^{3}+3 y \left(162 y^{6}+36 y^{3}+1\right) p_{2}p_{1}^{2}-\\
18 y^{3} \left(9 y^{3}+1\right) p_{2}^{2}p_{1}+18 y^{5} p_{2}^{3}.
\end{multline}
Notice that $\dim{\mathfrak{p(\mathfrak{g})}}=2$ for \eqref{eq:ex2_4} since condition \eqref{eq:cubic_lc_3_2DLA} is met. Projective equation \eqref{eq:ex2_1} provides an example of non-trivial metrisable equation with a rational first integral that satisfy condition \eqref{eq:cubic_lc_3_2DLA}. 

\textbf{Example 1 continued.}
For arbitrary values of $\mu$ and $\nu$ the first integral of  \eqref{eq:ex1_1} is expresses in terms of the gamma function and the relative non-autonomous first integral is difficult to obtain.
Let us discuss several particular cases of \eqref{eq:ex1_1}, when both autonomous and non-autonomous first integrals can be presented in the explicit form.

Suppose, that $\nu=0$ and $\mu=1$. Then, \eqref{eq:ex1_1} possesses the first integral
\begin{equation}\label{eq:ex1_2}
R=y^{2}\left(\frac{u^{2}}{y^{2}(u+y)^{2}}+\frac{2}{y}\right),
\end{equation}
and invariant
\begin{equation}\label{eq:ex1_3}
\mathcal{R}=(y \mathrm{e})^{\frac{\sqrt{(2y + 1)u^2 + 4uy^2 + 2y^3}}{u + y}}\frac{(\sqrt{(2y + 1)u^2 + 4uy^2 + 2y^3} + u)^{2}}{2u^2y + 4uy^2 + 2y^3}.
\end{equation}
Equation \eqref{eq:ex1_1} is metrisable for arbitrary values of $\mu$ and $\nu$. At $\mu=1$ and $\nu=0$ the corresponding Hamiltonian for geodesics is
\begin{equation}\label{eq:ex1_4}
H=y(2y-1)(2p_{1}^2 -2 (2y - 1)p_{1}p_{2} + (2y - 1)yp_{2}^{2}).
\end{equation}
This Hamiltonion has the linear first integral $L=p_{1}$. One can demonstrate that lifted first integral \eqref{eq:ex1_2} is a function of $L$ and $H$, while \eqref{eq:ex1_4} is functionally independent of $L$ and $H$. Therefore, \eqref{eq:ex1_4} is superintegrable with linear and a transcendental first integrals.

\begin{table}[!ht]
  \centering
  \begin{tabular}{llll}
    \hline
     & Parameters & $\dim{\mathfrak{p}(c)}$ \\
    \hline
    1 &\begin{tabular}[l]{l} $\beta_{3}=\alpha_{3}(\beta_{0}+1), \, \beta_{2}=\alpha_{1}(2\beta_{0}+1)/2, \, \beta_{1}=2\beta_{0}$ \\ $(\beta_{0}-1)^2+\alpha_{3}^{2}+\alpha_{1}^{2}\neq0,\, $ \end{tabular} & \begin{tabular}[l]{l} $1,  \alpha_{3}^{2}+ \alpha_{1}^{2}\neq0$,\\ $8,  \alpha_{3}= \alpha_{1}=0 $ \end{tabular}\\
    2 & $\beta_{3}=\beta_{2}=\alpha_{1}=0, \, \beta_{1}=-4,\, \beta_{0}=2$ &  \begin{tabular}[l]{l} $1, \alpha_{3}\neq0$,\\ $8,  \alpha_{3}=0 $ \end{tabular}\\
    3 & $\beta_{3}=\beta_{2}=\alpha_{1}=0, \, \beta_{1}=2,\, \beta_{0}=-1$ &  \begin{tabular}[l]{l} $1, \alpha_{3}\neq0$,\\ $8,  \alpha_{3}=0 $ \end{tabular}\\
    4 & $\beta_{3}=\beta_{2}=\alpha_{1}=0, \, \beta_{1}=-1\pm3\sqrt{3},\, \beta_{0}=-4$, &  \begin{tabular}[l]{l} $1, \alpha_{3}\neq0$,\\ $8,  \alpha_{3}=0 $ \end{tabular}\\
    5 & $\beta_{3}=0, \, \beta_{1}=-5/2,\, \beta_{0}=-1,\, \alpha_{1}=5\beta_{2}/2,\, \alpha_{3}=\beta_{2}^{2}$ &  \begin{tabular}[l]{l} $1, \beta_{2}\neq0$,\\ $8,  \beta_{2}=0 $ \end{tabular}\\
    6 & $\beta_{3}=0, \, \beta_{1}=2,\, \beta_{0}=-1,\, \alpha_{1}=-2\beta_{2}$ & \begin{tabular}[l]{l} $1,  \alpha_{3}^{2}+ \beta_{2}^{2}\neq0$,\\ $8,  \alpha_{3}= \beta_{2}=0 $ \end{tabular}
  \end{tabular}
  \caption{Values of the parameters that correspond to metrisable cases of system \eqref{eq:ex3_1}}\label{t:t1}
\end{table}

\textbf{Example 3}
We consider a variant of the cubic Kolmogorov system
\begin{align}
\begin{split}\label{eq:ex3_1}
  & y_{x} = y(\alpha_{0}-\alpha_{1}y-\alpha_{2}z-\alpha_{3}y^{2}) ,\\
  & z_{x} = z(\beta_{0}-\beta_{1}z-\beta_{2}y-\beta_{3}y^{2}-\beta_{4}z^{2}).
\end{split}
  \end{align}
Here $\alpha_{j}$, $j=0,1,2,3$ and $\beta_{k}$, $k=0,1,2,3,4$ are arbitrary parameters. We assume that $\alpha_{0}\alpha_{2}\beta_{4}\neq0$. Thus, without loss of generality, we set  $\alpha_{0}=\alpha_{2}=\beta_{4}=1$. Then, we can exclude $z$ via the first equation from \eqref{eq:ex3_1} and substitute the result into the second one. As a consequence, we obtain an oscillator from \eqref{eq:cubic_osc}, which is
\begin{equation}\label{eq:ex3_2}
\begin{gathered}
   y_{xx}+\frac{y_{x}^{3}}{y^{2}}+\left(3a_{3}y+3a_{1}-\frac{b_{1}+4}{y}\right)y_{x}^{2}+\bigg(3a_{3}^2y^4 + 6a_{1}a_{3}y^3+\\
   +(3a_{1}^2 - 2a_{3}b_{1} - 4a_{3} + b_{3})y^2-(2a_{1}b_{1}+5a_{1} - b_{2})y - b_{0} + 2b_{1} + 3\bigg)y_{x}+\\
   +y \bigg(a_{3}y^2 + a_{1}y - 1\bigg)\bigg(a_{3}^2y^4 + 2a_{3}a_{1}y^3 + (a_{1}^2 - a_{3}b_{1} - 2a_{3} + b_{3})y^2 +\\+ ( b_{2}-a_{1}b_{1} - 2a_{1})y - b_{0} + b_{1} + 1\bigg)=0.
  \end{gathered}
\end{equation}

Now we substitute the coefficients of \eqref{eq:ex3_2} into \eqref{eq:cubic_lc_3} and find the values of the parameters when this correlation is fulfilled. The results are presented in Table \ref{t:t1}. Notice that we do not consider complex values of the parameters and do not show cases of equations with $\dim \mathfrak{p}(\mathfrak{g})=8$ for any values of the parameters. For any value of the parameters an equation from \eqref{eq:ex3_2} will be metrisable and have a rational first integral. The existence of this rational first integral follows from integrability of the corresponding geodesic flow with a linear first integral $L=p_{1}$. 

However, metrisable equations from \eqref{eq:ex3_2} also have an invariant given by \eqref{eq:cubic_Inv_3} and, hence, explicit representation for geodesics \eqref{eq:explicit_geodesics}. For example, let us consider the case 8 from Table \ref{t:t1}. For a sake of simplicity, we also assume that $\beta_{2}=0$ and $\alpha_{3}=-\gamma^{2}$. As a result, we find that the geodesic flow
\begin{equation}\label{eq:ex3_3}
\begin{gathered}
H=-\gamma^{2}\left(\frac{p_{1}^{2}}{2}-\gamma^{2}y^{3}p_{1}p_{2}+\frac{\gamma^{2}y^{4}(A^{2}y^{2}-1)}{2}p_{2}^{2}\right),
  \end{gathered}
\end{equation}
is superintegrable with two additional firs integrals
\begin{equation}\label{eq:ex3_4}
\begin{gathered}
L=p_{1},\\
T=\frac{\left(2 p_{1}-\left(\gamma^{2} y^{3}-y \right) p_{2}\right)}{y^{3} p_{2}} {\mathrm e}^{2 x -\frac{2 p_{1} \left(\gamma^{2} y^{3} p_{2}-y p_{2}-p_{1}\right)}{y^{3} \gamma^{2} \left(\gamma^{2} y^{3} p_{2}-y p_{2}-2 p_{1}\right) p_{2}}}.
  \end{gathered}
\end{equation}
There are other examples of explicitly superintegrable geodesic flows among cases listed in Table \ref{t:t1}.

\section{Integrable conformal metrics}
\label{sec:conf}
In the previous Section the metrisability  of equations from \eqref{eq:cubic_osc} than can be linearized via \eqref{eq:GNT} is studied only in one case, namely when correlation \eqref{eq:cubic_lc_3} holds. Finding metrisable equations that satisfy either condition \eqref{eq:cubic_lc_1} or \eqref{eq:cubic_lc_2} for a generic metric $\mathfrak{g}$ is computationally difficult. However, this can be overcome for special cases of the metric $\mathfrak{g}$. 

First, we assume that $g_{12}=g_{21}=0$, which implies that $\psi_{2}=0$. Then we in addition suppose that $g_{11}=g_{22}$, which yields $\psi_{1}=\psi_{3}$. The former case of $\mathfrak{g}$  correspond to the orthogonal metrics, while the latter one is called conformal metrics \cite{Bolsinov1998}. 

\begin{theorem}
\label{th:th5}
Equation \eqref{eq:cubic_osc} is metrisable in the case of $\psi_{2}=0$ if and only if either
  \begin{equation}\label{eq:mc_p2_0_1}
     f k_{y}=-2gk^{2}+2fhk, \quad f_{y}=0, \quad  fk\neq0
  \end{equation}
  or
    \begin{equation}\label{eq:mc_p2_0_2}
    k=f=0.
  \end{equation}
The solution of the Liouville system \eqref{eq:Liouville} is either
  \begin{equation}\label{eq:mc_p2_0_1_LS}
     \psi_{1}=\frac{\epsilon}{k}\psi_{3}, \quad \psi_{3}=c_{5}\exp\left\{\frac{2}{3}\left(\int h dy -\epsilon x\right)\right\}, \quad f=\epsilon\neq0, \quad c_{5}k\neq0,
  \end{equation}
or
  \begin{equation}\label{eq:mc_p2_0_2_LS}
     \psi_{1}={\rm e}^{-\frac{4}{3}\int h dy}\left(2c_{6}\int g {\rm e}^{2\int h dy} dy +c_{7}\right), \quad \psi_{3}=c_{6}{\rm e}^{\frac{2}{3}\int h dy},
  \end{equation}
respectively. Here $c_{5}$, $c_{6}$ and $c_{7}$ are arbitrary constants such that $c_{5}c_{6}\neq0$.

In the case of the conformal metric $\mathfrak{g}$, i.e. $\psi_{1}=\psi_{3}$ and $\psi_{2}=0$, we have that \eqref{eq:cubic_osc} is metrisable if and only if
  \begin{equation}\label{eq:mc_p2_0_3}
     k=f, \quad h=g, \quad f_{y}=0.
  \end{equation}
The solution of the Liouville system is
  \begin{equation}\label{eq:mc_p2_0_3_LS}
     \psi_{1}=\psi_{3}=c_{5}\exp\left\{\frac{2}{3}\left(\int g dy -\epsilon x\right)\right\}, \quad f=k=\epsilon\neq0, \quad c_{5}\neq0.
  \end{equation}
\end{theorem}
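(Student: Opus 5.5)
Set $\psi_{2}=0$ directly in the Liouville system \eqref{eq:Liouville}, with the coefficients of \eqref{eq:cubic_osc}: $a_{3}=k$, $a_{2}=h$, $a_{1}=f$, $a_{0}=g$, all depending on $y$ only. The four equations become
\begin{equation*}
\psi_{1,x}=-\tfrac{2}{3}f\psi_{1},\quad \psi_{3,y}=\tfrac{2}{3}h\psi_{3},\quad \psi_{1,y}=-\tfrac{4}{3}h\psi_{1}+2g\psi_{3},\quad \psi_{3,x}=-2k\psi_{1}+\tfrac{4}{3}f\psi_{3}.
\end{equation*}
This is a linear system of Frobenius type: every first derivative of $\psi_{1},\psi_{3}$ is given. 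The plan is therefore:
\begin{itemize}
\item integrate the first two equations explicitly;
\item substitute into the last two;
\item read off the compatibility conditions, keeping $\Delta=\psi_{1}\psi_{3}\neq0$.
\end{itemize}

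\emph{Step 1: solve the first two equations.} They give $\psi_{1}=a(y)e^{-\frac{2}{3}f(y)x}$ and $\psi_{3}=b(x)e^{\frac{2}{3}\int h\,dy}$. Substituting into the third equation, we get
\begin{equation*}
\bigl(a'-\tfrac{2}{3}f'xa+\tfrac{4}{3}ha\bigr)e^{-\frac{2}{3}fx}=2g\,b(x)e^{\frac{2}{3}\int h},
\end{equation*}
and the fourth equation gives
\begin{equation*}
b'(x)e^{\frac{2}{3}\int h}=-2k\,a\,e^{-\frac{2}{3}fx}+\tfrac{4}{3}f\,b\,e^{\frac{2}{3}\int h}.
\end{equation*}

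\emph{Step 2: split by the $x$-dependence.} We use $g\neq0$ and $\psi_{1}\psi_{3}\neq0$.
\begin{itemize}
\item The third equation forces $b(x)e^{\frac{2}{3}fx}$ to be of the form $\lambda(y)+\mu(y)x$, which implies $f'=0$ whenever $f\neq0$. So set $f=\epsilon$ constant. (If $f$ is nonconstant, $b$ would have to be $y$-dependent, a contradiction.)
\item The fourth equation then says $b'-\tfrac{4}{3}\epsilon b=-2k a e^{-\frac{2}{3}\epsilon x}e^{-\frac{2}{3}\int h}$. Separating variables gives either $k=0$, or $b\propto e^{-\frac{2}{3}\epsilon x}$ together with $-2\epsilon b=-2ka\,e^{-\frac{2}{3}\epsilon x}e^{-\frac{2}{3}\int h}$, i.e. $\psi_{1}=\frac{\epsilon}{k}\psi_{3}$ (in the case $\epsilon\neq0$), which is \eqref{eq:mc_p2_0_1_LS}.
\item Feeding this back into the third equation, with $a=\frac{\epsilon c_{5}}{k}e^{\frac{2}{3}\int h}$, yields the ODE on $k$:
\begin{equation*}
-\frac{k_{y}}{k}+2h=\frac{2gk}{\epsilon},
\end{equation*}
which is exactly \eqref{eq:mc_p2_0_1}.
\item The mixed cases $k=0$, $f\neq0$ and $f=0$, $k\neq0$ are incompatible. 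For instance, with $f=0$, $k\neq0$, we get $b'=-2ka\,e^{-\frac{2}{3}\int h}$, so $b$ is linear in $x$; the third equation then forces $b'=0$, hence $a=0$, contradicting $\Delta\neq0$. Similarly for the other mixed case.
\item When $f=k=0$, $b$ is a constant $c_{6}$, and the third equation is a linear ODE for $a$. Its integration gives \eqref{eq:mc_p2_0_2_LS}.
\end{itemize}

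\emph{Step 3: the conformal case.} Impose $\psi_{1}=\psi_{3}$. The $f=k=0$ branch would require $\psi_{1}$ to be independent of $x$ while $e^{-\frac{4}{3}\int h}(\dots)=c_{6}e^{\frac{2}{3}\int h}$; this forces $2c_{6}g=0$, which is excluded by $g\neq0$. So only the first branch survives, with $\epsilon/k=1$, i.e. $k=f=\epsilon$. Then $k_{y}=0$, and \eqref{eq:mc_p2_0_1} reduces to $0=-2g\epsilon^{2}+2\epsilon^{2}h$, i.e. $h=g$. This gives \eqref{eq:mc_p2_0_3} and \eqref{eq:mc_p2_0_3_LS}.

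\emph{Main obstacle.} The main difficulty is the bookkeeping in Step 2: one must separate the functions of $x$ and $y$ rigorously and rule out the degenerate subcases ($\epsilon=0$ with $k\neq0$, a nonconstant $f$, or $b\equiv0$) using $\Delta\neq0$. To do this, I would first differentiate the third equation twice in $x$ to kill $a$, and only then split by powers and exponentials in $x$.
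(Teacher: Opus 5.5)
Your treatment of the orthogonal case $\psi_2=0$ is correct. It differs from the paper only in technique: the paper cross-differentiates to get two algebraic relations between $\psi_1$ and $\psi_3$ and then splits on $f$, whereas you integrate the first two equations and separate variables. One small correction there: the fourth equation alone only gives $b=C_1e^{\frac{4}{3}\epsilon x}+\frac{\kappa}{\epsilon}e^{-\frac{2}{3}\epsilon x}$, with $\kappa=ka\,e^{-\frac{2}{3}\int h\,dy}$ constant. It is the third equation that removes $C_1$, since for constant $f$ it says $b\,e^{\frac{2}{3}\epsilon x}$ is constant.

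The genuine error is in Step 3. The branch $f=k=0$ is \emph{not} incompatible with $\psi_1=\psi_3$. Substitute $\psi_1=\psi_3=c_6e^{\frac{2}{3}\int h\,dy}$ into the third equation $\psi_{1,y}=-\frac{4}{3}h\psi_1+2g\psi_3$. This gives $\frac{2}{3}h=-\frac{4}{3}h+2g$, i.e. $h=g$, not $2c_6g=0$. Equivalently, differentiating $2c_6\int g e^{2\int h\,dy}dy+c_7=c_6e^{2\int h\,dy}$ yields $g=h$. Hence $k=f=0$, $h=g$ is conformally metrisable with $\psi_1=\psi_3=c_5e^{\frac{2}{3}\int g\,dy}$. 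This is exactly the $\epsilon=0$ instance of \eqref{eq:mc_p2_0_3}, which the ``if'' direction of the statement covers. So Step 3 asserts something false and contradicts the statement you are proving. The restriction $\epsilon\neq0$ shown in \eqref{eq:mc_p2_0_3_LS} is not forced by the Liouville system, so you should not try to derive it.

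The simplest repair is the paper's route, which avoids the branch analysis altogether. Put $\psi_1=\psi_3$ directly into the four equations:
\begin{itemize}
\item comparing the first and fourth gives $6(k-f)\psi_1=0$, so $k=f$;
\item comparing the second and third gives $6(h-g)\psi_1=0$, so $h=g$;
\item $\psi_{1,x}=-\frac{2}{3}f\psi_1$ and $\psi_{1,y}=\frac{2}{3}g\psi_1$ are then compatible if and only if $f_y=0$.
\end{itemize}
Integrating gives \eqref{eq:mc_p2_0_3_LS} for every constant $\epsilon$, including $\epsilon=0$, and sufficiency is a direct check.
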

\begin{proof}
First, we substitute that $a_{3}=k$, $a_{2}=h$, $a_{1}=f$, $a_{0}=g$ and $\psi_{2}=0$ into \eqref{eq:Liouville}. As a result, we have four equations for two functions $\psi_{1}\neq0$ and $\psi_{3}\neq0$, which are
\begin{align}\label{eq:th4_1}
  \begin{split}
     3\psi_{1,x}+2f\psi_{1}=&0 , \quad 3\psi_{3,y}-2h\psi_{3}=0, \\
     3\psi_{1,y}+4h\psi_{1}-6g\psi_{3}=&0, \quad   3\psi_{3,x}+6k\psi_{1}-4f\psi_{3}=0.
  \end{split}
\end{align}
Cross differentiating the first and the third and the second and the fourth equations from \eqref{eq:th4_1} we get
  \begin{equation}\label{eq:th4_2}
     12 fg \psi_{3}-12gk\psi_{1}+2f_{y}\psi_{1}=0, \quad 12hk\psi_{1}-12kg\psi_{3}-6k_{y}\psi_{1}+4f_{y}\psi_{3}=0.
\end{equation}
Recall, that throughout this work we assume that $g\neq0$. Then, from the first equation of \eqref{eq:th4_2} we see that we need to consider two cases separately: $f\neq0$ and $f=0$.

Suppose that $f\neq0$. Then we exclude $\psi_{3}$ from the first equation from \eqref{eq:th4_1} and then substitute the result into the second one. Recall that if $\psi_{2}=0$, then $\Delta=\psi_{1}\psi_{3}\neq0$. As a consequence, taking into account that $\psi_{1}fg\neq0$, we get
  \begin{equation}\label{eq:th4_3}
    \psi_{3}=\frac{(6gk-f_{y})\psi_{1}}{6fg}, \quad 9g(2hkf - 2gk^2 + kf_{y} - fk_{y})-f_{y}^{2}=0.
\end{equation}
Then, substituting the expression for $\psi_{3}$ along with the expression for $\psi_{1,x}$, obtained from the first equation of \eqref{eq:th4_1}, to the last equation of \eqref{eq:th4_1} we obtain that $f_{y}=0$.  As a result, we immediately get compatibility conditions \eqref{eq:mc_p2_0_1} and the expression for $\psi_{3}$ from \eqref{eq:mc_p2_0_1_LS}.

Suppose that $f=0$. Then, from the first equation of \eqref{eq:th4_2} we find that $k=0$ and the second one is satisfied automatically. Substituting $f=k=0$ into \eqref{eq:th4_1} we find the expressions for $\psi_{1}$ and $\psi_{3}$ given in \eqref{eq:mc_p2_0_2_LS}.

Finally, we consider the case of the conformal metrics, i.e. we assume that $\psi_{3}=\psi_{1}\neq0$. As a consequence, from \eqref{eq:th4_1} we obtain that $g=h$ and $f=k$. Then, comparing $\psi_{1,xy}$ and $\psi_{1,yx}$ we obtain that $f_{y}=0$. The expression for $\psi_{1}$ can be directly found from the fist line of \eqref{eq:th4_1}. Notice also that the above metrics have infinitesimal homotety $\partial_{x}$. This completes the proof. 
\end{proof}

Now let us find intersections between families of linearizable equations provided in Proposition \ref{p:p2} and metrisable oscillators given in Theorem \ref{th:th5}. First, we consider the case of conformal metric $\mathfrak{g}$ and linearizability condition \eqref{eq:cubic_lc_2}:

\begin{theorem}
\label{th:th6}
The metric $\mathfrak{g}$ of the form
  \begin{equation}\label{eq:th6_1}
    ds^{2}=\exp\left\{2\left( x-\int g dy\right)\right\}\left(dx^{2}+dy^{2}\right),
  \end{equation}
with the geodesic flow with the Hamiltonian
\begin{equation}\label{eq:th6_2}
 H=\frac{1}{2}\exp\left\{2\left(\int g dy - x\right)\right\}\left(p_{1}^{2}+p_{2}^{2}\right),
\end{equation}
is integrable with the transcendental first integral
\begin{equation}\label{eq:th6_3a}
T=K_{1}^{1-i\sqrt{3}}K_{2}^{1+i\sqrt{3}}K_{3}^{-2}K_{4},
\end{equation}
or in the real form
\begin{equation}\label{eq:th6_3}
 T=\exp\left\{2\sqrt{3}\arctan\left(\frac{\sqrt{3}((l-1)p_{2}-3gp_{1})}{K_{3}}\right)\right\}\frac{K_{5}}{K_{3}^{2}},
\end{equation}
if and only if the function $g\neq0$ is a solution of the equation
  \begin{equation}\label{eq:th6_4}
 BA-gA_{y}=9 g g_{yy}-27 g_{y}^{2}-\left(9 g^{2}-15\right) g_{y}+2 \left(9 g^{2}+1\right) \left(3 g^{2}-1\right)=0.
\end{equation}

Here we use the notations 
  \begin{equation}\label{eq:th6_5}
A=18g^2 + 2 - 9g_{y}, \quad B=3g_{y}+3g^{2}-1, \quad l^{3}=A,
\end{equation}
and $K_{j}$, $j=1,2,3,4$ are the relative Killing vectors that given by
\begin{equation}
\begin{gathered}
\label{eq:th6_5a}
  K_{1,2}=12gp_{1} + (1 \mp \sqrt{3}i)(1 - 2l \pm \sqrt{3}i )p_{2}, \quad K_{3}=3gp_{1} + (l + 1)p_{2}, \\
   K_{4}=\exp\left\{\frac{2}{3} \int \frac{l^{2}}{g} dy\right\}, \quad K_{5}=K_{1}K_{2}=9g^2p_{1}^2 - 3(l - 2)gp_{1}p_{2} + 16(l^2 - l + 1)p_{2}^2.
\end{gathered}
\end{equation}

\end{theorem}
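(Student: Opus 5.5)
The plan is to combine the conformal metrisability criterion of Theorem \ref{th:th5} with the linearizability condition 2b) of Proposition \ref{p:p2}. Then I lift the resulting invariants of the projective equation to the geodesic flow using Proposition \ref{p:p_KV}.

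\emph{Step 1: reduce the oscillator.} By Theorem \ref{th:th5}, a conformal metric requires $k=f=\epsilon$, $h=g$ and $f_y=0$. The solution \eqref{eq:mc_p2_0_3_LS} gives $\psi_1=\psi_3=c_5\exp\{\frac23(\int g\,dy-\epsilon x)\}$. I would normalize $\epsilon=1$ (up to a scaling of $x$ and the sign convention of the statement). Recovering $g_{11}=g_{22}$ from $\psi_1=\Delta^2 g_{11}$ with $\Delta=\psi_1^2$ gives $g_{11}=\psi_1^{-3}$, hence the conformal factor $\exp\{2(x-\int g\,dy)\}$ up to a constant. This yields \eqref{eq:th6_1} and the Hamiltonian \eqref{eq:th6_2}.

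\emph{Step 2: impose linearizability.} Substituting $k=f=1$, $h=g$ into \eqref{eq:cubic_notation} gives $A=27g^2-9g^2+2-9g_y=18g^2+2-9g_y$ and $B=3g^2-1+3g_y$, which are the expressions in \eqref{eq:th6_5}. By Proposition \ref{p:p2}, case 2b), i.e. $\beta^2=3\alpha$ together with $gA_y-BA=0$, is equivalent to the existence of the first integral \eqref{eq:lc_FI} with $G_2^3=-A/\beta^3$, so $G_2=-l/\beta$. Expanding $BA-gA_y$ produces the second-order ODE \eqref{eq:th6_4}. This gives the ``if and only if'': the projective equation is integrable through \eqref{eq:lc_FI} exactly when \eqref{eq:th6_4} holds.

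\emph{Step 3: factor the first integral and lift it.} With $\beta^2=3\alpha$ we have $\rho=\sqrt{\beta^2-4\alpha}=i\beta/\sqrt3$, which is imaginary. I would set $\beta=\sqrt3$ (or a convenient normalization), so $\rho=i$ and the exponents in \eqref{eq:lc_FI} become $\beta\pm\rho$, i.e. proportional to $1\mp i\sqrt3$.

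The linear factors $(f-\beta G_2)u+3g=(1+l)u+3g$ and $6\alpha G_2u+(\beta\pm\rho)[\dots]$ are invariants of $\mathcal X$ that are linear in $u$. Proposition \ref{p:p_KV} applies to them because they are linear in $u$ and the cofactors are quadratic in $u$ (the coefficient of $u^3$ in $\mathcal X$ is $k$). Proposition \ref{p:p_KV} converts each of them into a homogeneous relative Killing vector on $T^*M$. For a conformal metric, $e^k=g^{kj}e_j$ only multiplies by the conformal factor, and the lift substitutes $u=p_2/p_1$.

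Multiplying by $p_1$ gives $K_3=3gp_1+(l+1)p_2$ and $K_{1,2}$ after rescaling by constants. The remaining factor $F^{2\rho}$ is a function of $y$ alone; from $F_y=\alpha F G_2^2/g$ it equals $\exp\{\frac{2}{3}\int l^2/g\,dy\}$ to the power $\rho$, and this gives $K_4$. The total homogeneity degree in $p$ is zero, since the exponents sum to $(\rho+\beta)+(\rho-\beta)-2\rho=0$, so the lift is a genuine first integral. Finally, $T$ is functionally independent of $H$ because it is non-constant on the level sets of $H$, being nonconstant in $p_2/p_1$.

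\emph{Step 4: real form.} Writing $K_{1,2}=P\pm iQ$ and $K_1^{1-i\sqrt3}K_2^{1+i\sqrt3}=|K_1|^2\exp\{2\sqrt3\arg K_1\}$ gives \eqref{eq:th6_3}, with $K_5=K_1K_2$.

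The main obstacle is the bookkeeping in Step 3. I would need to verify that the $F$-factor and the constants combine exactly into $K_4$ and $\arctan(\sqrt3((l-1)p_2-3gp_1)/K_3)$. This requires a careful choice of the cube root $l$ and of the branch of $\rho$, and I would check it by directly verifying $\{T,H\}=0$ modulo \eqref{eq:th6_4}.
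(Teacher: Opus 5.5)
Your proposal is correct and follows the paper's own route. You obtain the conformal oscillator $y_{xx}+y_x^3+gy_x^2+y_x+g=0$ from Theorem \ref{th:th5}, impose case 2b) of Proposition \ref{p:p2} to get \eqref{eq:th6_4} together with the first integral \eqref{eq:lc_FI}, and lift its linear-in-$u$ factors to relative Killing vectors via Proposition \ref{p:p_KV}. Your additional checks (recovering the conformal factor from $\psi_1$, $F^{2\rho}=K_4^{\rho}$, the common factor $i$ in the exponents $\rho+\beta$, $\rho-\beta$, $-2\rho$ when $\beta=\sqrt3$, and independence from $H$) fill in details the paper leaves implicit.
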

\begin{proof}
From Theorem \ref{th:th5} it follows that an equation from \eqref{eq:cubic_osc} is metrisable in the case of the conformal metric if and only if it has the form
\begin{equation}\label{eq:th6_7}
  y_{xx}+ \epsilon y_{x}^{3}+g y_{x}^{2}+ \epsilon y_{x}+g=0, \quad \epsilon g\neq0.
\end{equation}
Notice that, without loss of generality, we can set $\epsilon=1$.

Then, we apply to \eqref{eq:th6_7} integrability condition \eqref{eq:cubic_lc_2} from Proposition \ref{p:p2}. As a result, we obtain that \eqref{eq:th6_7} can be transformed to linear equation \eqref{eq:linear_eq} with \eqref{eq:cubic_lc_2_p} via \eqref{eq:GNT} if and only if correlation \eqref{eq:th6_4} is fulfilled. From Proposition \ref{p:p2} it also follows that linearizability is equivalent to the existence of the first integral \eqref{eq:lc_FI}. Under \eqref{eq:cubic_lc_2_p} the expression for \eqref{eq:lc_FI} is
\begin{multline}\label{eq:th6_8a}
R=\exp \left\{\frac{2}{3} \left(\int \frac{l^{2}}{g}d y \right)\right\}\left[(1 + \sqrt{3}i)(1-\sqrt{3}i - 2l)u + 12g\right]^{1 + \sqrt{3}i} \times \\
\left[(1 - \sqrt{3}i)(1+\sqrt{3}i - 2l)u + 12g\right]^{1 - \sqrt{3}i} \left[(1+l)y_{x}+3g\right]^{2}.
\end{multline}
Then we lift \eqref{eq:th6_8a} to \eqref{eq:th6_3a} via $u=y_{x}=H_{p_{2}}/H_{p_{1}}$. 

The real variant of integral \eqref{eq:th6_8a} has the form
\begin{multline}\label{eq:th6_8}
 R=\exp \left\{\frac{2}{3} \left(\int \frac{l^{2}}{g}d y \right)-2 \sqrt{3}\, \arctan \left(\frac{\sqrt{3}\, \left(\epsilon  y_{x}-l y_{x}+3 g \right)}{l y_{x}+\epsilon  y_{x}+3 g}\right)\right\} \times \\\frac{ \left(9 g^{2}+3 y_{x} \left(2 \epsilon -l \right) g +y_{x}^{2} \left(\epsilon^{2}-\epsilon  l +l^{2}\right)\right)}{\left(\left(l +\epsilon \right) y_{x}+3 g \right)^{2}}
\end{multline}
In the same way it can be lifted into \eqref{eq:th6_3}. 

One cans see that the first integral \eqref{eq:th6_8a} is formed by the invariants of \eqref{eq:cubic_osc}, that are linear in the first derivative. Consequently, using Proposition \ref{p:p_KV} one can find the corresponding relative Killing vectors. This completes the proof.
\end{proof}

Now we proceed with the most general linearizability criterion from Proposition \ref{p:p2}.

\begin{theorem}
\label{th:th7}
The metric $\mathfrak{g}$ of the form
  \begin{equation}\label{eq:th7_1}
    ds^{2}=\exp\left\{2\left( x-\int g dy\right)\right\}\left(dx^{2}+dy^{2}\right),
  \end{equation}
with the geodesic flow with the Hamiltonian
\begin{equation}\label{eq:th7_2}
 H=\frac{1}{2}\exp\left\{2\left(\int g dy -x\right)\right\}\left(p_{1}^{2}+p_{2}^{2}\right),
\end{equation}
is integrable with the first integral
\begin{equation}\label{eq:th7_3}
 T=K_{6}^{\rho+\beta}K_{7}^{\rho-\beta}K_{8}^{-2\rho}K_{9},
\end{equation}
if and only if the function $g\neq0$ is a solution of the equation
\begin{equation}\label{eq:th7_4}
 \beta^2(2\beta^2 - 9\alpha)^2(g A_{y} - BA)^3 - A^{5}(\beta^2 - 3\alpha)^{3}=0,
\end{equation}
such that $(2\beta^2 - 9\alpha)(\beta^2 - 3\alpha)A\neq0$.
Here the functions $A$, $B$ and $l$ are provided in \eqref{eq:th6_5} and
\begin{equation}\label{eq:th7_5}
\rho=\sqrt{\beta^{2}-4\alpha}, \quad \delta=2\beta^2 - 9\alpha.
\end{equation}
If, in addition, the correlation
\begin{equation}\label{eq:th7_6}
  4\alpha=\beta^{2}(1-r^{2}), \quad r \in \mathbb{Q}, \quad r\neq\pm1, \quad r\neq\pm\frac{1}{3},
\end{equation}
holds, then metric \eqref{eq:th7_1} is integrable with a rational first integral of an arbitrary degree given by
\begin{equation}\label{eq:th7_7}
T_{r}= K_{1}^{p+q}K_{2}^{p-q}K_{3}^{-2p}K_{4}^{\frac{q}{\beta}}.
\end{equation}
The relative Killing tensors $K_{j}$, $j=6,7,8,9$ have the form
\begin{equation}
\begin{gathered}
\label{eq:th7_7a}
 K_{1,2}= \left(3 \rho \mp \beta \right)l p_{2}\mp 2 (\delta\beta)^{\frac{1}{3}} \left(3 g p_{1}+\epsilon  p_{2}\right), \quad
 K_{3}=\beta^{\frac{2}{3}}lp_{2} - \delta^{\frac{1}{3}}(3gp_{1} + \epsilon p_{2}), \\
    K_{4}=\exp\left\{2 \alpha \rho  (\delta\beta)^{-\frac{2}{3}} \int l^{2}g^{-1}d y \right\}. 
\end{gathered}
\end{equation}

\end{theorem}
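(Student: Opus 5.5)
The proof will follow the same pattern as Theorem \ref{th:th6}. By Theorem \ref{th:th5}, equation \eqref{eq:cubic_osc} is metrisable by a conformal metric if and only if it has the form \eqref{eq:th6_7}, and we can normalize $\epsilon=1$. The solution \eqref{eq:mc_p2_0_3_LS} of the Liouville system, through \eqref{eq:Liouville_1}, gives $g_{11}=g_{22}=\psi_{1}^{-1/2}$ up to a constant factor. This is the metric \eqref{eq:th7_1}, and its Hamiltonian is \eqref{eq:th7_2}.

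\emph{Step 1: compute $A$ and $B$.} Substitute $k=f=1$, $h=g$ into \eqref{eq:cubic_notation}. This gives $A=18g^{2}+2-9g_{y}$ and $B=3g_{y}+3g^{2}-1$, which are the expressions in \eqref{eq:th6_5}. Linearizability case 2a) of Proposition \ref{p:p2}, condition \eqref{eq:cubic_lc_1}, then reads exactly as \eqref{eq:th7_4} together with the stated nondegeneracy conditions. Proposition \ref{p:p2} gives the equivalence between 2a) and the existence of the first integral \eqref{eq:lc_FI}, and this equivalence is what delivers the ``if and only if''. In the direction from integrability to the equation for $g$, I will rely on the equivalence $4)\Leftrightarrow 2)$ of Proposition \ref{p:p2}, restricted to the subcase with $(2\beta^{2}-9\alpha)(\beta^{2}-3\alpha)\neq0$.

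\emph{Step 2: make $G_{2}$ and $F$ explicit.} By \eqref{eq:cubic_lc_tr_G2_1}, $G_{2}$ is a rational expression in $A$ and $gA_{y}-BA$. Using \eqref{eq:th7_4}, I will write $gA_{y}-BA=A^{5/3}(\beta^{2}-3\alpha)\beta^{-2/3}\delta^{-2/3}$. This reduces $G_{2}$ to a constant multiple of $A^{1/3}=l$, namely $G_{2}=(\beta^{2}-3\alpha)^{0}\cdot l\,\beta^{-1/3}\delta^{-1/3}$ up to sign, which I will check against the coefficients in \eqref{eq:th7_7a}. With $G_{2}$ known, \eqref{eq:cubic_lc_tr_gen} gives $F_{y}/F=\alpha G_{2}^{2}/g$, so $F$ is proportional to $\exp\{c\int l^{2}g^{-1}dy\}$. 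Plugging this into \eqref{eq:lc_FI} shows that $F^{2\rho}$ produces $K_{4}$, that the three factors linear in $u$ produce $K_{1},K_{2},K_{3}$ after clearing constants, and that their exponents are $\rho+\beta$, $\rho-\beta$ and $-2\rho$.

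\emph{Step 3: lift to the geodesic flow.} Substitute $u=H_{p_{2}}/H_{p_{1}}=p_{2}/p_{1}$ and multiply each factor by $p_{1}$. The total degree in momenta is $(\rho+\beta)+(\rho-\beta)-2\rho=0$, so the homogenization is consistent and yields \eqref{eq:th7_3}. Proposition \ref{p:p_KV} then identifies the linear factors as relative Killing vectors. Functional independence from $H$ holds because $T$ depends on the momenta only through the ratio $p_{2}/p_{1}$ and is nonconstant. Involutivity is automatic, since the system has two degrees of freedom and $T$ is a first integral.

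\emph{Step 4: the rational case.} Under \eqref{eq:th7_6} we have $\rho=\beta r$. Dividing the exponents by $\beta$ and clearing denominators $r=q/p$ makes the exponents integers $p+q$, $p-q$ and $-2p$. The factor $K_{4}$ is raised to the power $q/\beta$, after normalizing the integral to $K_{4}^{\rho/\beta}$. The result is $T_{r}$ in \eqref{eq:th7_7}. The excluded values $r=\pm1$ give $\alpha=0$, and $r=\pm1/3$ give $\beta^{2}=3\alpha$. The main obstacle is the $y$-dependent factor $K_{4}$: for $T_{r}$ to be rational, the exponential must reduce to a rational function. I expect to handle this by using \eqref{eq:th7_4} to show that $l^{2}/g$ is proportional to the logarithmic derivative of one of the linear invariants' coefficients, so that $K_{4}$ is a power of a rational function of $g$ and $l$. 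I would verify this by a direct computation, and this is the step I anticipate needing the most care. The remaining steps are algebraic bookkeeping: checking the constant normalizations of $K_{1},K_{2},K_{3}$ against \eqref{eq:lc_FI}.
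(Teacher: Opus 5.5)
Your route is the paper's: Theorem \ref{th:th5} gives the conformal form \eqref{eq:th6_7}, case 2a) of Proposition \ref{p:p2} gives \eqref{eq:th7_4}, and \eqref{eq:lc_FI} is lifted through $u=p_2/p_1$, with Proposition \ref{p:p_KV} supplying the relative Killing vectors. You correctly use \eqref{eq:cubic_lc_1}; the paper's proof cites \eqref{eq:cubic_lc_3} by mistake. Your Step 2 is more explicit than the paper, and it works. With real cube roots, \eqref{eq:th7_4} gives $gA_y-BA=\kappa A^{5/3}$ with $\kappa=(\beta^2-3\alpha)(\beta\delta)^{-2/3}$. Hence $G_2=l(\beta\delta)^{-1/3}$ and $F^{2\rho}=K_4$. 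Using $6\alpha=\tfrac32(\beta^2-\rho^2)$, the factors $6\alpha G_2u+(\beta\pm\rho)[(1-\beta G_2)u+3g]$ and $(1-\beta G_2)u+3g$ are constant multiples of $K_1/p_1$, $K_2/p_1$ and $K_3/p_1$, exactly as in \eqref{eq:th7_7a}.

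The substantive problem is Step 4. The ``main obstacle'' you identify does not exist, and your plan for it would fail. A rational integral here (as in Kozlov and Kruglikov--Matveev, which the paper cites) means rational in the momenta, with coefficients that are functions on $M$. Since $K_4$ depends on $y$ alone, $T_r$ is already a ratio of homogeneous polynomials in $p_1,p_2$ once the exponents of $K_1,K_2,K_3$ are integers. Nothing remains to prove.

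Conversely, $K_4$ is in general not a rational function of $g$ and $l$. From $gA_y-BA=\kappa A^{5/3}$ you get $l^2/g=\kappa^{-1}(A_y/A-B/g)$. Since $B/g=3g_y/g+3g-1/g$, this gives
\[
K_4=\Big(\frac{A}{g^{3}}\exp\Big\{\int\Big(\frac{1}{g}-3g\Big)dy\Big\}\Big)^{\frac{2\alpha\rho}{\beta^2-3\alpha}},
\]
which keeps the nonlocal factor $\exp\{\int(1/g-3g)\,dy\}$, the same one that appears in \eqref{eq:p7_1}. The Hamiltonian \eqref{eq:th7_2} itself contains $\exp\{2\int g\,dy\}$. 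Drop this step rather than trying to prove it.

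There are also some smaller slips.
\begin{itemize}
\item With $\psi_2=0$ and $\psi_1=\psi_3$ one has $\Delta=\psi_1^2$, so \eqref{eq:Liouville_1} gives $g_{11}=g_{22}=\psi_1/\Delta^2=\psi_1^{-3}$, not $\psi_1^{-1/2}$. Only $\psi_1^{-3}$ reproduces \eqref{eq:th7_1}. Alternatively, check directly that $e^{2\varphi}(dx^2+dy^2)$ with $\varphi=x-\int g\,dy$ projects to $a_3=a_1=\varphi_x=1$ and $a_2=a_0=-\varphi_y=g$.
\item In Step 4, with $r=q/p$ the cleared exponents are $q+p$, $q-p$, $-2q$, and $K_4$ gets the power $p/\beta$. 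The displayed $T_r$ corresponds to $r=p/q$ and $T^{q/\beta}$. Your remark about ``normalizing to $K_4^{\rho/\beta}$'' is also off, since $K_4$ already carries $\rho$ and enters $T$ with exponent $1$.
\item $r=\pm1/3$ is excluded because it makes $\delta=2\beta^2-9\alpha=0$. The case $\beta^2=3\alpha$ would require $r^2=-1/3$ and never occurs.
\end{itemize}
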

\begin{proof}
The proof is similar to those of Theorem \ref{th:th6}. First, we observe that the projective equation that correspond to \eqref{eq:th7_2} is \eqref{eq:th6_7} and set $\epsilon=1$. Then, we apply integrability criterion \eqref{eq:cubic_lc_3} to \eqref{eq:th6_7}. Consequently, we get that \eqref{eq:th6_7} is linearizable and has first integral \eqref{eq:lc_FI} if and only if \eqref{eq:th7_4} is fulfilled. Then, after some simplifications, from \eqref{eq:lc_FI} we obtain 
\begin{multline}\label{eq:th7_8}
T=\exp\left\{\frac{2 \alpha \rho \int \frac{l^{2}}{g}d y }{ (\delta\beta)^{\frac{2}{3}}}\right\}\left[ \beta^{\frac{2}{3}} l y_{x} -\delta^{\frac{1}{3}} \left(\epsilon  y_{x}+3 g \right)\right]^{-2\rho}\times \\ \left[  \left(\beta +3 \rho \right)l y_{x}+2 (\beta\delta)^{\frac{1}{3}} \left(  y_{x}+3 g \right) \right]^{\rho -\beta} 
\left[\left(3 r -\beta \right)l y_{x} -2 (\beta\delta)^{\frac{1}{3}} \left( y_{x}+3 g \right) \right]^{\rho+\beta}.
\end{multline}
This integral can be lifted to \eqref{eq:th7_3}. 

Finally, in order to obtain a rational first integral from \eqref{eq:th7_3} we substitute \eqref{eq:th7_6} into it. Notice that we exclude the cases of $r^{2}=1$ and $r^{2}=1/9$ since in the former $\alpha=0$ and in the latter $2\beta^{2}-9\alpha=0$. 

The relative Killing tensors $K_{j}$, $j=6,7,8,9$ can be obtained via formulas \eqref{eq:Killing_vector} from Proposition \ref{p:p_KV} from the multipliers of \eqref{eq:th7_8}, which are invariants linear in $u$ of the vector field associated with \eqref{eq:cubic_osc}. This completes the proof.
\end{proof}

Notice that we do not provide explicit expressions for the cofactors of the relative Killing vectors given in Theorems \ref{th:th6} and \ref{th:th7} since they have a cumbersome form.

Let us connect and generalize the results of Theorems \ref{th:th6} and \ref{th:th7} by introducing the concept of generalized Darboux functions and generalized Darboux integrability for both projective equation \eqref{eq:cubic_eq} and Hamiltonian system \eqref{eq:geodesics_H}. Recall that, in the classical theory of integrability of polynomial finite-dimensional vector fields, the Darboux functions are defined as $f_{1}^{s_{1}}\ldots f_{r}^{s_{r}} \exp\left\{\frac{f_{r+1}}{f_{r+2}}\right\}$, where $g$, $f$, $f_{1}^{s_{1}}\ldots f_{r}^{s_{r}}\in\mathbb{C}[x]$, $s_{1},\ldots,s_{r}\in\mathbb{C}$ and $x\in\mathbb{R}^{n}$ \cite{Llibre_book,Zhang,Goriely}. A finite-dimensional vector field has a first integral which is a Darboux function, if and only if the multipliers of this Darboux function are polynomial and/or exponential invariants of this vector field, whose cofactors are linearly dependent \cite{Llibre_book,Zhang,Goriely}. It is said that a polynomial $n$-dimensional vector field is Darboux integrable if it has $n-1$ functionally independent first integrals that are Darboux functions. 

On the other hand, one can see that the vector fields $\mathcal{X}_{\Gamma}$ and $\mathcal{X}_{H}$ are polynomial only in the velocities and momenta, respectively, while both vector fields $\mathcal{X}_{p}$ and $\mathcal{X}$ are polynomial only in the first derivative $u = y_{x}$. Furthermore, from Proposition \ref{p:p2} and Theorems \ref{th:th6} and \ref{th:th7}, we see that these vector fields can have first integrals which are Darboux functions of invariants that are polynomial only in the above subsets of the phase variables. Thus, it is natural to relax the requirement that both the vector fields and the invariants are polynomials and assume that they are polynomial only in a subset of the phase variables. A similar concept for two-dimensional vector fields was considered in \cite{GG2003,GG2010,GS2025}. This allows us to introduce the definitions of a \textit{generalized Darboux function} and \textit{generalized Darboux integrability} for both projective and Hamiltonian geodesic vector fields. 

First, we consider the projective vector fields $\mathcal{X}_{p}$ and their invariants $R$ that belong to $C^{\infty}(M)[u]$ with the cofactors $l$ also belonging to $C^{\infty}(M)[u]$. We will call the Darboux functions of such invariants generalized Darboux functions. Moreover, we say that the vector field $\mathcal{X}_{p}$  is generalized Darboux integrable if it has two functionally independent first integrals which are generalized Darboux functions. It is a straightforward computation to verify that the vector field $\mathcal{X}_{p}$ is generalized Darboux integrable if and only if the cofactors of the multipliers of the corresponding generalized Darboux function are linearly dependent (see, e.g. \cite{GS2025}). 

Second, in a similar way, we can introduce the definition of a generalized Darboux function and generalized Darboux integrability for Hamiltonian system \eqref{eq:geodesics_H} using relative Killing tensors. Indeed, by definition, $K$ and the respective cofactors belong to $C^{\infty}(M)[p_{1},p_{2}]$ (see, \cite{Kruglikov2024,Kruglikov2025,Kruglikov2026}). We will call the Darboux functions of the relative Killing tensors generalized Darboux functions. We will call a Hamiltonian system integrable in the generalized Darboux sense if it has a first integrals that is a generalized Darboux function, is in involution with the Hamiltonian and is functionally independent of it. 

One can see that both the linearizable cases of \eqref{eq:cubic_osc} and Hamiltonian \eqref{eq:th7_2} under either condition \eqref{eq:th6_4} or \eqref{eq:th7_4} provide examples of generalized Darboux integrable finite-dimensional vector fields and Hamiltonian systems, correspondingly. Further details and algorithms for the construction of generalized Darboux first integrals for both the projective equation and Hamiltonian geodesic flow will be developed elsewhere. Note also that in \cite{GS2025} this type of integrals and invariants was systematically considered for two-dimensional vector fields.

\begin{proposition}
  Conditions \eqref{eq:th6_4} and \eqref{eq:th7_4} are integrable as equations for the function $g$.
\end{proposition}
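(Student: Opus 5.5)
Both \eqref{eq:th6_4} and \eqref{eq:th7_4} are autonomous ordinary differential equations for $g(y)$. The variable $y$ does not appear explicitly: with $\epsilon=1$, $k=f=1$, $h=g$ we have $A=18g^{2}+2-9g_{y}$ and $B=3g_{y}+3g^{2}-1$. The plan is therefore to reduce the order by the standard substitution $g_{y}=P(g)$ and then integrate the resulting first order equation explicitly.

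\textbf{Condition \eqref{eq:th6_4}.} This is second order and autonomous. With $g_{y}=P(g)$ and $g_{yy}=PP_{g}$ it becomes
\begin{equation*}
9gPP_{g}-27P^{2}-(9g^{2}-15)P+2(9g^{2}+1)(3g^{2}-1)=0 .
\end{equation*}
This is an Abel equation of the second kind. To bring it to a tractable form, work with $A$ instead of $g_{y}$, i.e. set $A=l^{3}$ as in \eqref{eq:th6_5}. Note that $BA-gA_{y}=0$ says exactly that $A$ is an invariant of the first order operator $\frac{d}{dy}$ with cofactor $B/g$. Writing $g_{y}=(18g^{2}+2-A)/9$ and $A_{y}=A_{g}g_{y}$, the condition reads
\begin{equation*}
gA_{g}\,(18g^{2}+2-A)=9AB, \qquad B=\frac{45g^{2}-7-A}{3}.
\end{equation*}
This is a Riccati-type (in fact Bernoulli-like, quadratic in $A$) equation for $A(g)$. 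For the Riccati case, find one particular solution that is polynomial in $g$ by an ansatz $A=a_{0}+a_{1}g+a_{2}g^{2}$. This reduces the equation to a linear one, which is solvable by quadratures.

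Having $A(g)$, and hence $P(g)=(18g^{2}+2-A)/9$, the function $g$ is recovered from the quadrature $\int dg/P(g)=y+c$. That gives integrability.

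\textbf{Condition \eqref{eq:th7_4}.} Here the key step is the cube-root substitution already built into the statement. Since $\beta^2-3\alpha\neq0$ and $\delta\neq0$, \eqref{eq:th7_4} is equivalent to
\begin{equation*}
gA_{y}-BA=\kappa A^{5/3}, \qquad \kappa^{3}=\frac{(\beta^{2}-3\alpha)^{3}}{\beta^{2}\delta^{2}},
\end{equation*}
choosing a branch. Put $A=l^{3}$. Then $gA_y-BA=3gl^{2}l_{y}-Bl^{3}$, so the equation becomes
\begin{equation*}
3gl_{y}=Bl+\kappa l^{3}.
\end{equation*}
Substitute $g_{y}=(18g^{2}+2-l^{3})/9$, so that $B=(45g^{2}-7-l^{3})/3$. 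The pair $(g,l)$ then satisfies a planar autonomous polynomial system in which $y$ is the time variable. Eliminating $y$ gives a first order equation $dl/dg$ that is Abel/Bernoulli in $l$ (or, after inversion, in $g$). I would look for its integrability by Darboux's method, searching for invariant curves of low degree in $(g,l)$ (e.g. $l=\lambda g$, $l^{2}=\mu g^{2}+\nu$) with cofactors that combine to zero. Alternatively I would use the Bernoulli structure in $l$ when the $g$-dependence permits. The geometric reason this should work is that Proposition \ref{p:p2} ties these conditions to the existence of explicit linearizing functions $G_{2}$; the relation \eqref{eq:cubic_lc_tr_G2_1}, $G_{2}\propto A^{2}/(gA_y-BA)\propto A^{1/3}=l$, suggests $l$ is the natural variable.

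\textbf{Main obstacle.} The hard step is exhibiting the explicit first integral of the reduced $(g,l)$ system, i.e. finding the particular solution or invariant curves. I would first try polynomial invariant curves of degree at most two, since the linear-in-$u$ invariants $K_{j}$ suggest such structure. Failing that, I would try an integrating factor of the form $g^{a}l^{b}(\dots)^{c}$.
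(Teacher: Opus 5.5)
There is a genuine gap in both halves of your argument.

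\textbf{Condition \eqref{eq:th6_4}.} First, a slip: from $9g_y=18g^2+2-A$ you get $B=(27g^2-1-A)/3$, not $(45g^2-7-A)/3$. More seriously, the resulting equation $gA_g(18g^2+2-A)=3A(27g^2-1-A)$ is not of Riccati type. The coefficient of $A_g$ depends on $A$, so it is an Abel equation of the second kind (your $P$-equation rewritten). A particular solution therefore does not linearize it. Your quadratic ansatz in fact returns only $A=0$. The other invariant curves behind the paper's integrating factor \eqref{eq:p7_2} are the complex cubics $A=27g^2-1\pm 9ig(3g^2-1)$, i.e.\ the two factors of the quadratic in its denominator. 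Together with $g=0$, these curves do give a Darboux integrating factor. That is the paper's second argument, but it is not the mechanism you describe. The paper's first argument is the step you bypassed by changing the independent variable. In $y$, the condition $gA_y=BA$ is \emph{linear} in $A$. Since $B/g=3g_y/g+3g-1/g$, it integrates at once to $Ag^{-3}\exp\{\int(1/g-3g)\,dy\}=\mathrm{const}$, which is \eqref{eq:p7_1}.

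\textbf{Condition \eqref{eq:th7_4}.} Your reduction to $3gl_y=Bl+\kappa l^3$, with $\kappa$ running over the three cube roots, is correct. It is exactly the paper's splitting into three first-degree equations. But you then eliminate $y$ and defer the existence of a first integral to an unspecified search for invariant curves in the $(g,l)$-plane. You flag this yourself as the main obstacle, and nothing guarantees such a search succeeds. So the statement is not proved.

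The missing idea is to use the Bernoulli structure in $y$ itself, treating $g(y)$ as a given coefficient. With $m=l^{-2}=A^{-2/3}$ one gets $m_y+\frac{2B}{3g}m=-\frac{2\kappa}{3g}$. Its integrating factor is $\exp\int\frac{2B}{3g}\,dy=g^2\exp\{2\int(g-\frac{1}{3g})\,dy\}=gC_y$, with $C$ as in \eqref{eq:C_sol} for $f=1$, $h=g$. Hence $gA^{-2/3}C_y+\frac{2\kappa}{3}C=\mathrm{const}$. The three values of $\kappa$ give precisely $R_{1}$ and $R_{2,3}$ in \eqref{eq:p7_3}, and $\kappa=0$ recovers \eqref{eq:p7_1}. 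These integrals are nonlocal in $(g,g_y)$, since they contain $\int\cdots dy$. That is the sense of integrability the paper actually establishes for \eqref{eq:th7_4}; it never produces a local Darboux integral in the $(g,l)$-plane.
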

\begin{proof}
We begin with condition \eqref{eq:th6_4}. If we integrate it as an equation for $A$ we obtain
\begin{equation}\label{eq:p7_1}
R=Ag^{-3}\exp\left\{\int \left(\frac{\epsilon^{2}}{g}-3g\right)dy\right\}.
\end{equation}
If we differentiate \eqref{eq:p7_1} with respect to $y$, we obtain \eqref{eq:th6_4}. 

On the other hand, we can directly consider integrability of \eqref{eq:th6_4} as an equation for $g$. One can demonstrate, for example using the approach from \cite{GS2025}, that \eqref{eq:th6_4} as a dynamical system has 3 first order with respect to $g_{y}$ invariants curves. Taking into account that $g$ itself is also an invariant curve we obtain the following integrating factor for \eqref{eq:th6_4}
  \begin{equation}\label{eq:p7_2}
M=\frac{g}{\left(9 g_{y}^{2} \epsilon^{2}-6 \epsilon^{2} \left(\epsilon^{2}-3 g^{2}\right) g_{y} +\left(\epsilon^{2}+9 g^{2}\right) \left(\epsilon^{2}-3 g^{2}\right)^{2}\right)
 \left(9 g_{y}-2 \epsilon^{2}-18 g^{2} \right)^{\frac{1}{3}}}.
\end{equation}
Notice that these invariant curves are in the denominator of \eqref{eq:p7_2}. 

Now we proceed with the condition \eqref{eq:th7_4}. This is a second-order third-degree differential equation for $g$, i.e. it is cubic in $g_{yy}$. Thus, \eqref{eq:th7_4} can be considered as a multiplication of three second-order first-degree differential equations for $g_{yy}$. Each of these three equations has a first integral, which can be presented in a compact form via the function $C$ as follows
\begin{equation}\label{eq:p7_3}
R_{1}=gA^{-2/3}C_{y}+\frac{2(\beta^{2}-3\alpha)C}{3\beta^{2/3}(2\beta^{2}-9\alpha)^{2/3}}, \quad R_{2,3}=gA^{-2/3}C_{y}+\frac{(\pm i\sqrt{3}-1)(\beta^{2}-3\alpha)C}{3\beta^{2/3}(2\beta^{2}-9\alpha)^{2/3}}
\end{equation}
This completes the proof.
\end{proof}

\begin{proposition}
  In the generic cases $\dim{\mathfrak{p}(\mathfrak{g})}=1$ for metrics \eqref{eq:th6_1} defined by \eqref{eq:th6_4} and \eqref{eq:th7_4}.
\end{proposition}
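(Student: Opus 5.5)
The plan is to follow the strategy of Theorem \ref{th:thc1} and use the invariant classification of \cite{Bagderina2016}, now applied to the projective equation \eqref{eq:th6_7} (with $\epsilon=1$). Its coefficients are $k=f=1$ and $h=g$, and $g$ is constrained by \eqref{eq:th6_4} or \eqref{eq:th7_4}.

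\textbf{Step 1: lower bound.} The metric \eqref{eq:th6_1} admits the infinitesimal homothety $\partial_{x}$, noted at the end of the proof of Theorem \ref{th:th5}. Equation \eqref{eq:th6_7} is autonomous, so $\partial_{x}$ lies in $\mathfrak{p}(\mathfrak{g})$ and $\dim{\mathfrak{p}(\mathfrak{g})}\geq1$. Since $\mathfrak{p}(\mathfrak{g})=\mathfrak{p}(c)$ for metrisable structures, the dimension lies in $\{1,2,3,8\}$. It therefore suffices to exclude $2$, $3$ and $8$ for generic solutions $g$.

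\textbf{Step 2: exclude dimension $8$.} Compute the Liouville invariants via \eqref{eq:Liouville_inv_co_AB}, with $A$ and $B$ taken from \eqref{eq:th6_5}. Here $\mathcal{L}_{1}=-B_{y}/3=-(3g_{yy}+6gg_{y})/3$. Vanishing of $\mathcal{L}_{1}$ and $\mathcal{L}_{2}$ forces $B$ to be constant, and then $A_{y}=0$ as well. Combining this with $B$ constant gives $g_{y}=\mathrm{const}$ and $g^{2}=\mathrm{const}$, so $g$ is constant. Substituting constant $g$ into \eqref{eq:th6_4} or \eqref{eq:th7_4} imposes an algebraic condition on $g$ (and on $\alpha,\beta$). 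The genuine non-constant solutions produced by the first integrals of the Proposition above, \eqref{eq:p7_1} and \eqref{eq:p7_3}, therefore have $\mathcal{L}_{1}\neq0$.

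\textbf{Step 3: exclude dimensions $2$ and $3$.} For the generic equation from \eqref{eq:cubic_osc} compute $J_{0}$, $j_{0},\dots,j_{3}$, $\Gamma_{0}$ and $\Lambda$ following Theorem 1 of \cite{Bagderina2016}. Here \eqref{eq:cubic_lc_3} no longer holds, so these quantities must be recomputed, expressed through $g$, $g_{y}$ and $g_{yy}$. Higher derivatives of $g$ are eliminated by differentiating \eqref{eq:th6_4}, or by solving \eqref{eq:th7_4} for $g_{yy}$ on one of its three branches. After this elimination every invariant is a function of $(g,g_{y})$ only. By the criteria of \cite{Bagderina2016}:
\begin{itemize}
\item the dimension is $3$ only if $J_{0}=j_{0}=j_{1}=j_{2}=\Gamma_{0}j_{3}+5\mathcal{L}_{1}=0$;
\item the dimension is $2$ only if the two basis algebraic invariants $I_{1}$, $I_{2}$ are constant.
\end{itemize}

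These conditions become algebraic relations $P(g,g_{y})=0$, or $I_{1,y}=I_{2,y}=0$, on the two-dimensional phase space of the second-order ODE. For a generic solution, meaning a generic value of the first integral $R$ in \eqref{eq:p7_1} or $R_{1,2,3}$ in \eqref{eq:p7_3}, the trajectory is not contained in such a curve unless the curve is invariant. So it suffices to show these relations do not hold identically on the phase plane. It is enough to exhibit one point $(g,g_{y})$ where, say, $j_{1}\neq0$ and $I_{1}$ has nonzero derivative along the vector field defined by \eqref{eq:th6_4}. Then dimension $3$ and dimension $2$ occur only on finitely many special trajectories.

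\textbf{Main obstacle.} The hard part is Step 3 for \eqref{eq:th7_4}. It is cubic in $g_{yy}$ and carries the parameters $\alpha$ and $\beta$, so the symbolic expressions for $j_{1}$, $I_{1}$ and $I_{2}$ become heavy. I would first rewrite each branch using the first integrals \eqref{eq:p7_3} through $C$. I would then check non-constancy of the invariants at a specific numerical choice of parameters and a phase point, and extend to generic parameters by analyticity.
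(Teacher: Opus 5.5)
Your proposal follows the same route as the paper: $\partial_x$ gives $\dim\mathfrak{p}(\mathfrak{g})\geq1$, and Bagderina's invariants, computed for \eqref{eq:th6_7} with $g$ constrained by \eqref{eq:th6_4} or \eqref{eq:th7_4}, exclude dimensions $2$, $3$ and $8$; your phase-plane argument also makes ``generic'' precise, which the paper leaves implicit. The one correction is that the paper finds $J_0\neq0$ for this family, which already rules out dimension $3$ (its criterion requires $J_0=0$) and dimension $8$, so your Step 2 is correct but redundant; it also puts the equation in Bagderina's generic class, so the dimension-$2$ test must use that class's four basis invariants $I_1,\dots,I_4$, not the two invariants of the $J_0=0$ classes used in Theorem \ref{th:thc1}.
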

\begin{proof}
  The proof is straightforward. First, observe that for \eqref{eq:cubic_osc} the projective Lie algebra is at least one-dimensional. Second, we compute relative differential invariants $J_{j}$, $j=0,1,2,3,4$ for equations from \eqref{eq:cubic_osc} of the form \eqref{eq:mc_p2_0_3} and that satisfy either \eqref{eq:th6_4} or \eqref{eq:th7_4}. Third, we find that $J_{0}\neq0$ and basis algebraic invariants $I_{j}$, $j=1,2,3,4$  are not constant. Finally, according to \cite{Bagderina2016} the projective Lie algebra is two=dimensional if and only if all basis invariants are constant. This completes the proof.
\end{proof}

\section{Conclusion}
In this work we have considered metrisability of the autonomous case of the projective connection. We have constructed three families of (super)integrable two-dimensional metrics that are parameterised by arbitrary functions. In the first case, we have obtained a family of superintegrable metrics with an explicit expression for the unparameterized geodesics. In the other two cases we have constructed families of integrable metrics with transcendental first integrals, which at some values of the parameters degenerate into a rational one with an arbitrary degree. We have studied the dimensions of the projective Lie algebras of obtained metrics. We show that all constructed metrics generically have one-dimensional projective Lie algebra. We have found particular instances, when this Lie algebra is 2, 3 and 8 dimensional. This has allowed us to demonstrate when obtained metrics reduced to known or flat ones. The developed mechanism for the constructing integrable metrics may be further extended by considering projective equations \eqref{eq:cubic_eq}, including its autonomous case \eqref{eq:cubic_osc}, which can be linearised by non-autonomous transformations \eqref{eq:GNT}. Moreover, we have introduced the notion of the generalized Darboux integrability in the context of both projective vector fields and geodesic flows and demonstrate that both families of metrics obtained in Section \ref{sec:conf} are integrable in this sense. We believe that the concept of the generalized Darboux integrability can be useful for constructing and classifying integrable geodesic flows with the help of the relative Killing vectors and/or the invariants of the projective vector fields.

\section{Acknowledgments}
J.G. is partially supported by the Agencia Estatal de Investigac\'ion grant PID2020-113758GB-I00 and AGAUR grant number 2021SGR 01618. D.S. is partially supported by H2020-MSCA-COFUND-2020-101034228-WOLFRAM2. Authors are grateful to anonymous referees for their valuable comments and suggestions.

\end{document}